\documentclass[aps, pra,twocolumn, showpacs,superscriptaddress,10pt,floatfix]{revtex4-1}

\synctex=1

\usepackage[utf8]{inputenc}
\usepackage[T1]{fontenc}
\usepackage{verbatim}
\usepackage{bm}
\usepackage{graphicx}
\usepackage{grffile}
\usepackage{amsthm}
\usepackage{epsfig}
\usepackage{amsmath,amssymb,amsfonts,mathtools,bbm,MnSymbol,mathrsfs,xfrac}
\usepackage{longtable}
\usepackage{tabularx}
\usepackage{colortbl}
\usepackage[table]{xcolor}
\usepackage[breaklinks=true,colorlinks=true,linkcolor=blue,urlcolor=blue,citecolor=blue]{hyperref}
\usepackage{comment}
\usepackage{color}
\usepackage[makeroom]{cancel}
\usepackage{tikz}
\usetikzlibrary{decorations.pathreplacing,calligraphy}
\usepackage{graphicx}
\DeclareGraphicsExtensions{.pdf,.png,.jpg}
\usepackage{algorithm2e}
\usepackage{ dsfont }

\usepackage{soul}
\pdfoutput=1

\interfootnotelinepenalty=10000

\renewcommand{\[}{\begin{equation}}
\renewcommand{\]}{\end{equation}}

\newcommand{\ket}[1]{|#1\rangle}
\newcommand{\bra}[1]{\langle#1|}

\newcommand{\pro}[2]{|#1\rangle\langle#2|}
\newcommand{\mean}[1]{\langle#1\rangle}
\newcommand{\abs}[1]{|#1|}

\newcommand{\tr}{\mathrm{tr}}

\newcommand{\norm}[1]{\left\|#1\right\|}
\newcommand{\normt}[1]{\|#1\|}

\newcommand{\R}{{\hat{\rho}}}
\newcommand{\A}{{\alpha}}
\newcommand{\G}{{\gamma}}

\newcommand{\bi}{{\boldsymbol{i}}}
\newcommand{\bj}{{\boldsymbol{j}}}
\renewcommand{\bm}{{\boldsymbol{m}}}

\newcommand{\ham}{{\hat{H}}}

\definecolor{mygray}{gray}{0.6}

\theoremstyle{definition}

\newtheorem{theorem}{Theorem}
\newtheorem{lemma}{Lemma}
\newtheorem{corollary}{Corollary}

\definecolor{dfcol}{cmyk}{1, 0.2108, 0.13, 0.3}
\newcommand{\df}[1]{\ifthenelse{\boolean{}}{\textcolor{dfcol}{[{\bf DF}: #1]}}{}}

\newcommand{\hamz}{\hat{H}}
\newcommand{\hamd}{\hat{V}}
\newcommand{\hamh}[1]{\hat{h}(#1)}
\newcommand{\hamv}[1]{\hat{v}(#1)}
\newcommand{\floor}[1]{\lfloor #1 \rfloor}
\newcommand{\B}{\beta}
\newcommand{\haml}{\hat{H}^{(l)}}

\renewcommand{\L}{\lambda}

\begin{document}

\title{Quantum Charging Advantage Cannot Be Extensive Without Global Operations}

\author{Ju-Yeon Gyhm}
\email[E-mail: ]{kjy3665@snu.ac.kr}
\affiliation{Center for Theoretical Physics of Complex Systems, Institute for Basic Science (IBS), Daejeon 34126, Republic of Korea}
\affiliation{Department of Physics and Astronomy, Seoul National University, 1 Gwanak-ro, Seoul 08826, Korea}

\author{Dominik \v{S}afr\'{a}nek}
\email[E-mail: ]{dsafranekibs@gmail.com}
\affiliation{Center for Theoretical Physics of Complex Systems, Institute for Basic Science (IBS), Daejeon 34126, Republic of Korea}

\author{Dario Rosa}
\email[E-mail: ]{dario\_rosa@ibs.re.kr}
\thanks{\\ D\v{S} and DR contributed equally}
\affiliation{Center for Theoretical Physics of Complex Systems, Institute for Basic Science (IBS), Daejeon 34126, Republic of Korea}

\date{\today}

\begin{abstract}
Quantum batteries are devices made from quantum states, which store and release energy in a fast and efficient manner, thus offering numerous possibilities in future technological applications. They offer a significant charging speedup when compared to classical batteries, due to the possibility of using entangling charging operations. We show that the maximal speedup that can be achieved is extensive in the number of cells, thus offering at most quadratic scaling in the charging power over the classically achievable linear scaling. To reach such a scaling, a global charging protocol, charging all the cells collectively, needs to be employed. This concludes the quest on the limits of charging power of quantum batteries and adds to other results in which quantum methods are known to provide at most quadratic scaling over their classical counterparts.
\end{abstract}

\pacs{}

\maketitle

\emph{Introduction.---}
In recent years tremendous efforts have been devoted to developing quantum technologies, which are now coming to fruition in several fields of practical use.
Among the largest successes is quantum metrology~\cite{giovannetti2011advances}, which led to the detection of gravitational waves~\cite{abbott2016observation}, quantum cryptography~\cite{pirandola2020advances}, which finds applications in communicating sensitive data~\cite{elliott2005current,chen2021integrated}, quantum computing, which promises to revolutionize chemistry~\cite{cao2018quantum} as well as to speed up or solve important problems in optimization, cybersecurity and data analysis~\cite{aaronson2008limits}, and nanoscale thermodynamic devices, which offer unprecedented precision in thermometry~\cite{menges2016temperature}. At large, society is moving toward quantum technologies, because they promise to offer faster, smaller, and more precise devices.

All of these achievements require an efficient way of storing and using energy, as well as fast charging and discharging. The necessity of charging and discharging goes well beyond the quantum world. Examples are electric vehicles where the charging time is one of the main bottlenecks in preventing the widespread use of such technology, or future fusion power plants, in which a large amount of energy needs to be pumped in a short amount of time and discharged in an instant to start the reaction. In the quantum world, nanoscale devices will require nanoscale batteries, with no energy to spare.


Outstanding successes of quantum technologies prompt a question whether quantum effects can also improve the energy storage to satisfy current and future demands. This leads to the notion of quantum battery, which is a quantum mechanical system acting as an energy storage, and in which quantum effects are expected to provide significant advantages over its classical counterpart (see \cite{campaioli2018quantum, bhattacharjee2020quantum} for reviews). Starting from the work of Alicki and Fannes~\cite{PhysRevE.87.042123}, the possibility of using quantum effects (like coherence and entanglement) to increase the performance of a quantum battery has been heavily studied.
These studies address several figures of merit, such as work extraction \cite{PhysRevE.87.042123, Hovhannisyan_2013}, energy storage \cite{PhysRevB.98.205423, PhysRevE.99.052106, PhysRevResearch.2.023095, Quach_2020, Crescente_2020fluct}, charging stability \cite{Friis_2018, PhysRevB.100.115142, PhysRevE.100.032107, Rosa_2020}, available energy \cite{PhysRevLett.122.047702, PhysRevLett.122.210601, PhysRevResearch.2.033413} (with the notion of ergotropy \cite{Allahverdyan_2004}) and charging power~\cite{Binder_2015, Campaioli_2017, Le_2018, PhysRevLett.120.117702, PhysRevB.99.205437, Crescente_2020, PhysRevA.101.032115, Rossini_2020, zakavati2020bounds, ghosh2020fast, seah2021quantum, PhysRevLett.125.040601}, the last being the actual focus of this letter.
It has been shown~\cite{Campaioli_2017} that quantum effects lead to a speedup in the charging power of a quantum battery. The source of this quantum speedup lies in the use of entangling operations, in which the cells are charged collectively as a whole. Those operations, where the number of cells that are being entangled together collectively scales with the system size (i.e., creating multi-partite entanglement), are called \emph{global operations}. In contrast, classical batteries are charged in parallel, meaning that each cell is charged independently of each other. The advantage of this collective versus parallel charging is measured by the ratio $\Gamma$, called the \emph{quantum charging advantage}~\cite{Campaioli_2017}. However, it is still not known \emph{how large} the quantum advantage is in general. To this end, the best known result is~\cite{Campaioli_2017}
\[
\label{eq:quantum_advantage_old}
\Gamma < \gamma \big(k^2 (m - 1) + k \big),
\]
in which $\gamma$ is a model-dependent constant, $k$ is the maximum number of cells that are collectively charged, while $m$ (called \emph{participation number}) is the maximum number of parallel charging operations in which a single cell appears.

In principle, this bound allows for a super-extensive scaling of the quantum advantage, meaning that the advantage can scale more than linearly with the number $L$ of cells. For example, consider a charging protocol that has a finite and fixed value of $k$ but having \textit{all-to-all} couplings. In such a case, the participation number of a given cell is of order $m =
{{L-1}\choose{k-1}}\approx (L-1)^{k-1}/(k-1)!$, leading to a quantum advantage of order $L^{k-1}$.


This prompted a race toward finding the best possible scaling --- the authors of~\cite{Campaioli_2017} found that the scaling is of order $L$ at most through an extensive numerical search, and proposed a conjecture that this extensive scaling cannot be surpassed. The search for scaling advantages continued in 
~\cite{Le_2018, PhysRevLett.120.117702, PhysRevB.98.205423, PhysRevE.99.052106, zhang2018enhanced, PhysRevA.101.032115}, which also showed at most extensive scaling, but it was later shown that some of these advantages were not caused by genuine quantum effects~\cite{Juli_Farr__2020}. A genuine, extensive, quantum advantage was found in~\cite{Rossini_2020}, in a setup including \textit{both} global charging operations and all-to-all couplings. The conjecture still held, but remained unproven, together with uncertain role as to which all-to-all interactions play in determining the quantum advantage.

In this letter, we prove this conjecture, showing that a quantum battery provides \textit{at most} extensive advantage over classical batteries. Furthermore, we show that this scaling is achievable only via \textit{global} charging operations, \textit{i.e.}, we show that all-to-all interactions, and more generally, the participation number, does not provide any scaling advantage. 

We first provide a \textit{general} bound (Theorem \ref{theorem:1}), constraining the maximum charging power achievable with a \textit{general} quantum battery with any general Hamiltonian, not necessarily realized by $L$ identical cells, thus including also more general cases described in the literature \cite{Le_2018}.
The conjecture is proven as a consequence of this theorem (Corollary~\ref{corollary:1.1}), applied to the battery made of identical cells. Together with examples showing extensive advantage \cite{Binder_2015, Rossini_2020}, already found and discussed in the literature, this result concludes the quest for the best possible scaling which can be obtained by quantum batteries.

\emph{Setup.---}
We consider quantum batteries made out of a time independent initial Hamiltonian, $\hamz$, having discrete spectrum. At time $t = 0$ a possibly time-dependent driving Hamiltonian, $\hamd (t)$, is turned on and the initial state $\R_0$ is evolved according to the quench
\[\label{eq:state_evolution}
\tfrac{d\R_t}{dt}=-i[\hamd(t),\R_t] .
\]
The energy stored in the battery, measured by the initial Hamiltonian, changes from $E(0) = \tr(\hamz \R_0)$ to $E(t) = \tr(\hamz \R_t)$ during time evolution.
Charging the battery means reaching large values of $E(t) - E(0)$.

An important figure of merit is the instantaneous charging power of the battery.
It is defined as the instantaneous change in the energy stored per unit of time:
\[\label{eq:power}
P(t)=\tr\big(\hamz\,  \tfrac{d\R_t}{dt}\big). 
\]
where we used that $\hamz$ is time-independent.

Generally, the instantaneous power is bounded in terms of the commutator between $\hamz$ and the driving term
\[\label{eq:most_general}
\abs{P(t)}\leq \norm{[\hamz,\hamd(t)]} \leq 2\norm{\hamz}\norm{\hamd(t)}
\]
through the operator norm~\footnote{It is proved by inserting Eq.~\eqref{eq:state_evolution} to Eq.~\eqref{eq:power}, which gives $P(t)=\tr(\R[\hamz,\hamd])=\sum_k\rho_k\bra{\psi_k}[\hamz,\hamd]\ket{\psi_k}\leq \norm{[\hamz,\hamd]}\leq 2\norm{\hamz}\norm{\hamd}$, using the spectral decomposition $\R=\sum_k\rho_k\pro{\psi_k}{\psi_k}$ and $\sum_k\rho_k=1$. See, for example, Ref.~\cite{Juli_Farr__2020}}.

However, the driving is often limited in realistic situations. For example, in lattice systems, the interaction couples only nearby sites and therefore $\hamd$ transfers energy only between not-too-distant energy levels of the initial Hamiltonian $\hamz$.
Taking the spectral decomposition of the initial Hamiltonian to be $\hamz = \sum_{j} E_j \ket{E_j}\bra{E_j}$, where we assume the energy levels $E_j$ being ordered, we express the driving Hamiltonian as $\hamd=\sum_{j , m=1}^{N} V_{jm}\ket{E_{j}}\bra{E_m}$. The limiting property is formalized as follows: we define $\Delta E$ as the minimum value such that for all $j$ and $m$,
\[\label{eq:property}
\mathrm{when}\quad | E_{j}-E_{m} | >\Delta E,\quad \mathrm{then}\quad V_{jm} = 0.
\]




Thus, it is natural to look for a more precise bound than eq.~\eqref{eq:most_general}, taking this common property into account.


\emph{Main result.---} In the conditions of eq. \eqref{eq:property},  we now show that a more stringent bound can be derived.

\begin{theorem}\label{theorem:1}
For driving that couples energy levels with at most $\Delta E$ energy difference, as expressed by eq.~\eqref{eq:property}, the instantaneous charging power is bounded as
\begin{equation}\label{eq:main_teorem}
\abs{P(t)}\leq \Delta E\norm{\hamd(t)- v_{\min}(t)}/2,
\end{equation}
where $v_{\min}(t)$ is the smallest eigenvalue of $\hamd(t)$, and $\norm{~}$ denotes the operator norm.
\end{theorem}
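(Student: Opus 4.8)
The plan is to reduce the statement to an operator-norm bound on the commutator $[\hamz,\hamd(t)-v_{\min}(t)]$ and then to extract the decisive factor of two from positivity. First I would use that shifting the driving by a multiple of the identity leaves the dynamics invariant: since $[\hamz,v_{\min}(t)\I]=0$, we have $P(t)=\tr(\R_t[\hamz,\hamd(t)])=\tr(\R_t[\hamz,\hamd(t)-v_{\min}(t)])$, and the operator $\tilde V\equiv\hamd(t)-v_{\min}(t)$ is now positive semidefinite. Writing $\R_t=\sum_k\rho_k\pro{\psi_k}{\psi_k}$ with $\rho_k\ge0$ and $\sum_k\rho_k=1$ exactly as in the footnote, convexity gives $\abs{P(t)}\le\max_{\norm{\psi}=1}\abs{\bra{\psi}[\hamz,\tilde V]\ket{\psi}}$. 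Since $[\hamz,\tilde V]$ is anti-Hermitian, hence normal, its numerical radius equals its operator norm, so it suffices to prove the clean bound $\norm{[\hamz,\tilde V]}\le\tfrac{\Delta E}{2}\norm{\tilde V}$. The gain over eq.~\eqref{eq:most_general} must come from combining $\tilde V\succeq0$ with the band condition \eqref{eq:property}.

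The core is a positivity lemma: for $\tilde V\succeq0$ and any Hermitian operator whose spectrum lies in an interval of width $w$, $\norm{[\,\cdot\,,\tilde V]}\le\tfrac{w}{2}\norm{\tilde V}$. I would prove it through the numerical radius. Centering the Hermitian operator to $\hamz_c$ with $\norm{\hamz_c}\le w/2$, and setting $\ket{\phi}=\tilde V^{1/2}\ket{\psi}$, $\ket{\eta}=\tilde V^{1/2}\hamz_c\ket{\psi}$, one has $\bra{\psi}[\hamz_c,\tilde V]\ket{\psi}=2i\,\Im\braket{\eta}{\phi}$ together with the polarization identity $4\nu\,\Im\braket{\eta}{\phi}=\norm{\eta-i\nu\phi}^2-\norm{\eta+i\nu\phi}^2$. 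Each squared norm equals $\norm{\tilde V^{1/2}(\hamz_c\mp i\nu)\ket{\psi}}^2\le\norm{\tilde V}\bra{\psi}(\hamz_c^2+\nu^2)\ket{\psi}\le\norm{\tilde V}\big((w/2)^2+\nu^2\big)$, and optimizing over $\nu=w/2$ gives $\abs{\bra{\psi}[\hamz_c,\tilde V]\ket{\psi}}\le\tfrac{w}{2}\norm{\tilde V}$. The point is that the factor $1/2$ improvement over the generic $2\norm{\hamz}\norm{\hamd}$ comes entirely from $\tilde V\succeq0$: positivity is what lets the two squared norms be bounded separately and then subtracted. This already establishes the theorem whenever the driving couples only levels lying inside a single window of width $\Delta E$, because there $\hamz$ restricted to the relevant subspace has spread at most $\Delta E$.

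The remaining, and hardest, step is to pass from one window to the general banded situation, where the spectral range of $\hamz$ may be arbitrarily large while $\tilde V$ connects levels at most $\Delta E$ apart. I would partition the energy axis into consecutive windows of width $\Delta E$ with projectors $\Pi_n$; condition \eqref{eq:property} then renders $\tilde V$ block tridiagonal, $\Pi_m\tilde V\Pi_n=0$ for $\abs{m-n}\ge2$, so that $[\hamz,\tilde V]$ splits into diagonal pieces $[\hamz,\Pi_n\tilde V\Pi_n]$ (each handled by the lemma) and off-diagonal pieces $[\hamz,\Pi_n\tilde V\Pi_{n+1}+\mathrm{h.c.}]$, whose norms can be prevented from accumulating by an even/odd two-coloring of the window index that keeps each family block diagonal. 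The main obstacle is quantitative rather than structural: the band controls only energy \emph{differences}, whereas the positivity estimate, and any factorization $\tilde V=LL^\dagger$ of a banded operator (whose Cholesky factor is banded but whose square root is not), naturally sees window \emph{ranges} of size up to $2\Delta E$; a careless combination, or the spectral-cut representation $[\hamz,\tilde V]=\int[\P_\lambda,\tilde V]\,d\lambda$ with each cut localized to a width-$\Delta E$ window, therefore produces $\Delta E\,\norm{\tilde V}$ rather than the sharp $\tfrac{\Delta E}{2}\norm{\tilde V}$. Recovering this factor of two—i.e.\ deploying the same full-strength use of positivity that powers the single-window lemma within the localized estimate, rather than merely the coarse bound that off-diagonal blocks of a positive operator have norm at most half the total—is where the decisive work lies, and is what ultimately pins down the constant in eq.~\eqref{eq:main_teorem}.
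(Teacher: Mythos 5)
Your reduction to the operator bound $\norm{[\hamz,\tilde V]}\leq\tfrac{\Delta E}{2}\norm{\tilde V}$ with $\tilde V=\hamd-v_{\min}\succeq 0$ is sound, and your single-window positivity lemma is correct: the polarization argument with $\ket{\eta\mp i\nu\phi}=\tilde V^{1/2}(\hamz_c\mp i\nu)\ket{\psi}$, the bound $\norm{\tilde V^{1/2}\chi}^2\leq\norm{\tilde V}\normt{\chi}^2$, and the optimization $\nu=w/2$ do yield $\norm{[\hamz_c,\tilde V]}\leq\tfrac{w}{2}\norm{\tilde V}$ whenever the spectrum of $\hamz$ fits in a single interval of width $w$. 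However, that special case is not the theorem. The hypothesis of the theorem constrains only the energy \emph{differences} coupled by the driving, while the spectral range of $\hamz$ is unbounded, and your own text concedes that the passage to this general banded situation --- the block-tridiagonal decomposition, the even/odd coloring, the recovery of the lost factor of two --- ``is where the decisive work lies'' without actually performing it. As you correctly diagnose, the sketched routes fail quantitatively: the diagonal blocks $\Pi_n\tilde V\Pi_n$ are positive and each window has width $\Delta E$, but the off-diagonal families $\sum_n(\Pi_n\tilde V\Pi_{n+1}+\mathrm{h.c.})$ are not positive (off-diagonal corners of a positive operator are not), live on windows of width $2\Delta E$, and there are several of them to sum, so a naive assembly overshoots $\tfrac{\Delta E}{2}\norm{\tilde V}$ by a factor of at least four. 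This is a genuine gap: the entire technical content of Theorem~1 is precisely this multi-window step, and the proposal stops at identifying the obstruction rather than overcoming it.

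For contrast, the paper closes this step by a mechanism that does not use positivity of $\tilde V$ at all. It constructs, for each $e\in(0,\Delta E]$, a diagonal operator $\hamh{e}$ whose eigenvalues are $\pm\tfrac12$ according to the parity of $\floor{(E_j-e)/\Delta E}$ (a ``folding'' of the spectrum of $\hamz$ into a bounded window), together with a sign-modified copy $\hamv{e}$ of the driving that is \emph{isospectral} to $\hamd$, and proves the exact integral identity $[\hamz,\hamd]=\int_0^{\Delta E}[\hamh{e},\hamv{e}]\,de$. The triangle inequality then gives $\norm{[\hamz,\hamd']}\leq\Delta E\norm{\hamd'}$ for \emph{every} banded $\hamd'$, and the factor of two is recovered not from positivity of the shifted driving but from the freedom to subtract any multiple of the identity: $\inf_\lambda\norm{\hamd-\lambda}=\norm{\hamd-v_{\min}}/2$. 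If you wish to salvage your approach, the missing ingredient is some analogue of this folding-plus-sign-gauge construction (or another device) that converts the unbounded spectrum of $\hamz$ into an effective width-$\Delta E$ problem without degrading the constant; the positivity lemma alone does not supply it.
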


Hence, the operator norm of the initial Hamiltonian, central in inequality~\eqref{eq:most_general}, \textit{is not} the relevant figure of merit. Instead, the crucial quantity is the maximal value of energy (as measured by $\hamz$) that can be transferred by $\hamd$ in a single time step. While non-trivial to prove, this result is very intuitive. 
The charging power is the amount of the energy stored in the battery in a single time step. Thus, this change in energy must be bounded by the maximum amount of energy that the driving term can transfer to the system during that time.


The fact that the bound is not given by the operator norm of $\hamz$ has a far reaching consequence. As outlined in the Introduction, it has been a matter of active research which combination of initial and driving Hamiltonians can reach a charging power scaling with $\norm{\hamz}$. Theorem \ref{theorem:1} shows that to reach this scaling one needs to consider driving terms having non vanishing matrix elements between the ground state and the highly excited states of $\hamz$.
This latter property defines the so-called \textit{global} charging operations which we discuss more extensively later. Another point of this bound is that it applies to any Hamiltonian $\hamz$, even with interacting cells.





\emph{Sketch of the proof.---}
The full proof of the theorem is technically involved. Here we sketch the main idea, while we refer the reader to the Supplementary material for details.

We express the commutator between $\hamz$ and $\hamd$ as an integral of commutators which are more easily and directly bounded. In particular, we define certain operator functions $\hamh{e}$ and $\hamv{e}$, depending on a continuous parameter $e$ and satisfying
\[\label{eq:commutator_as_integral}
\left[\hamz,\hamd\right]=\int_{e=0}^{\Delta E} [\hamh{e},\hamv{e}]\,  de,
\]
as well as the following properties
\[
\norm{\hamh{e}}=\frac{1}{2},\quad\quad\norm{\hamv{e}}=\norm{\hamd}
.
\]


We apply triangle inequality to eq.~\eqref{eq:commutator_as_integral} to derive bound $\norm{\left[\hamz,\hamd\right]}\leq \Delta E\norm{\hamd(t)}$. 
Since any number $\lambda$ commutes with $\hamz$, and $\hamd'_\lambda=\hamd-\lambda$ also satisfies eq.~\eqref{eq:property}, we can make this bound tighter by minimizing over $\lambda$,
\[
\norm{\left[\hamz,\hamd\right]}\!=\!\inf_\lambda\norm{\left[\hamz,\hamd'_\lambda\right]}\!\leq\! \inf_\lambda\Delta E\!\norm{\hamd\!-\!\lambda}\!=\!\frac{\Delta E}{2}\!\norm{\hamd\!\!-\!v_{\min}}.
\]
The theorem then follows from eq.~\eqref{eq:most_general}.


\emph{Lattice case.---}
In case of a battery made by cells, each of them given, for example, by a qubit, Theorem~\ref{theorem:1} provides a much more stringent bound than other known bounds \cite{Campaioli_2017}. 

We consider a battery composed of $L$ identical cells, having initial Hamiltonian 
\[\label{eq:initial_lattice_Hamiltonian}
\hamz=\sum_{l=1}^{L}\haml,
\]
where $\haml=\hat{I}\otimes\cdots\otimes\hat{I}\otimes \hat{H}_{\mathrm{s}} \otimes\hat{I}\otimes\cdots\otimes\hat{I}$ and $\hat{H}_{\mathrm{s}}$ is the single site Hamiltonian at the $l$-th place. We charge this battery by turning on the driving Hamiltonian,
\begin{eqnarray}\label{eq:interaction_Hamiltonian}
\hamd(t)=\sum_{\bi\in K(L,k)}{\hat{V}_{\bi}}(t),
\end{eqnarray}
where, by definition, each term in the summation couples together at most $k$ cells. Expressed mathematically,
\begin{align}
K(L,k)&=\bigcup_{n=1}^k C(L,n),\\
C(L,n)&=\{(i_1,\dots,i_n)| i_1<\cdots<i_n \text{\ and\ }i_j\in \{1,\dots,L\}\},\nonumber
\end{align}
where $C(L,n)$ is a set of all combinations of $n$ sites, 
and $V_\bi$ acts as an identity on the site which does not appear in the index, i.e., for any local matrix $\hat{M}^{(l)}=\hat{I}\otimes\cdots\otimes\hat{I}\otimes\hat{M}\otimes\hat{I}\otimes\cdots\otimes\hat{I}$, where $\hat{M}$ is at the $l$-th place, if $l\not\in\bi=(i_1,\dots,i_n)$, then $
[\hat{M}^{(l)},\hamd_\bi(t)]=0$. The cases with $k\propto L$ are called \emph{global operations}. The corollary follows.
\begin{figure}[t!]
\begin{center}
    \includegraphics[width=.9\hsize]{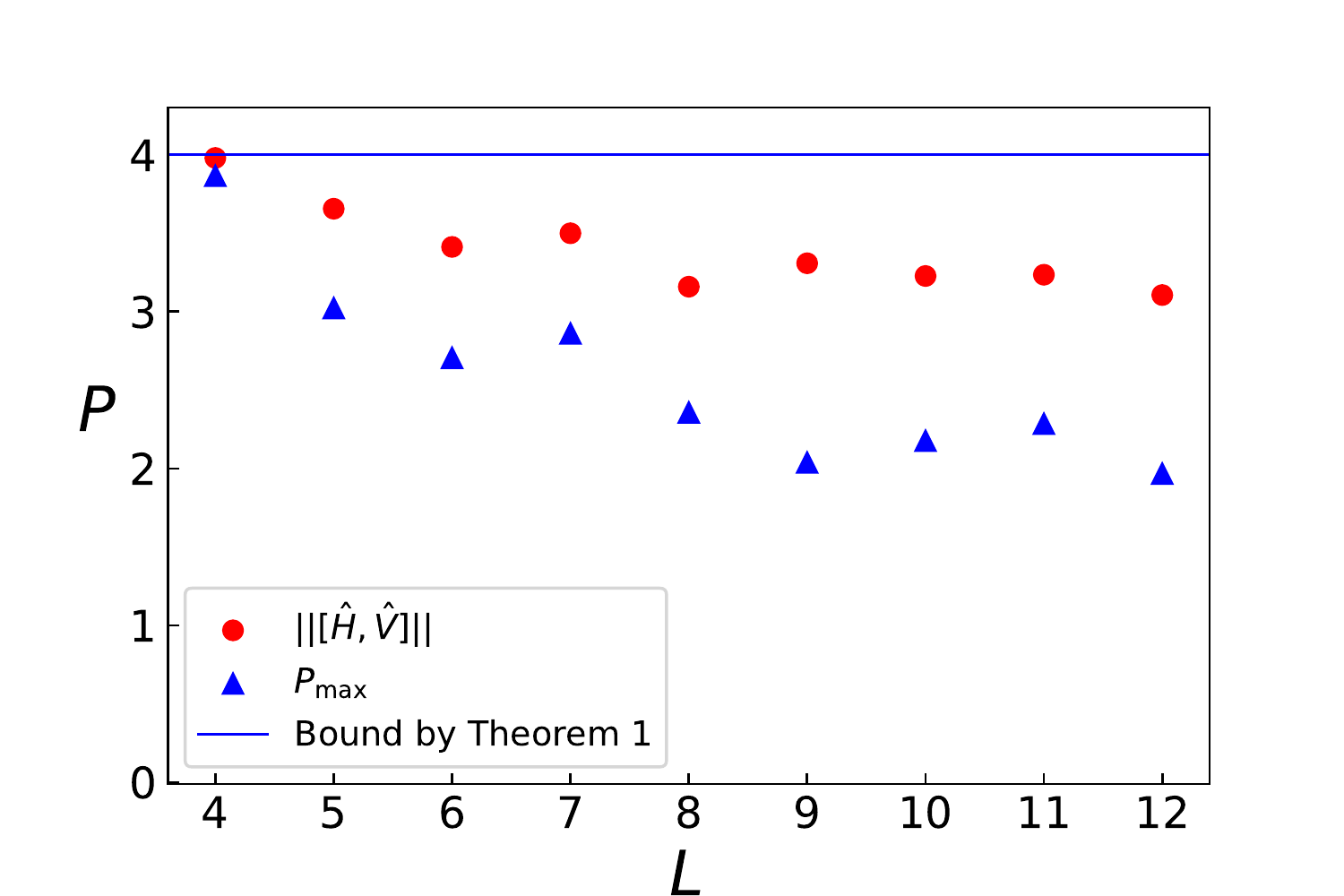}\\
    \ \ (a)\\
    \includegraphics[width=.9\hsize]{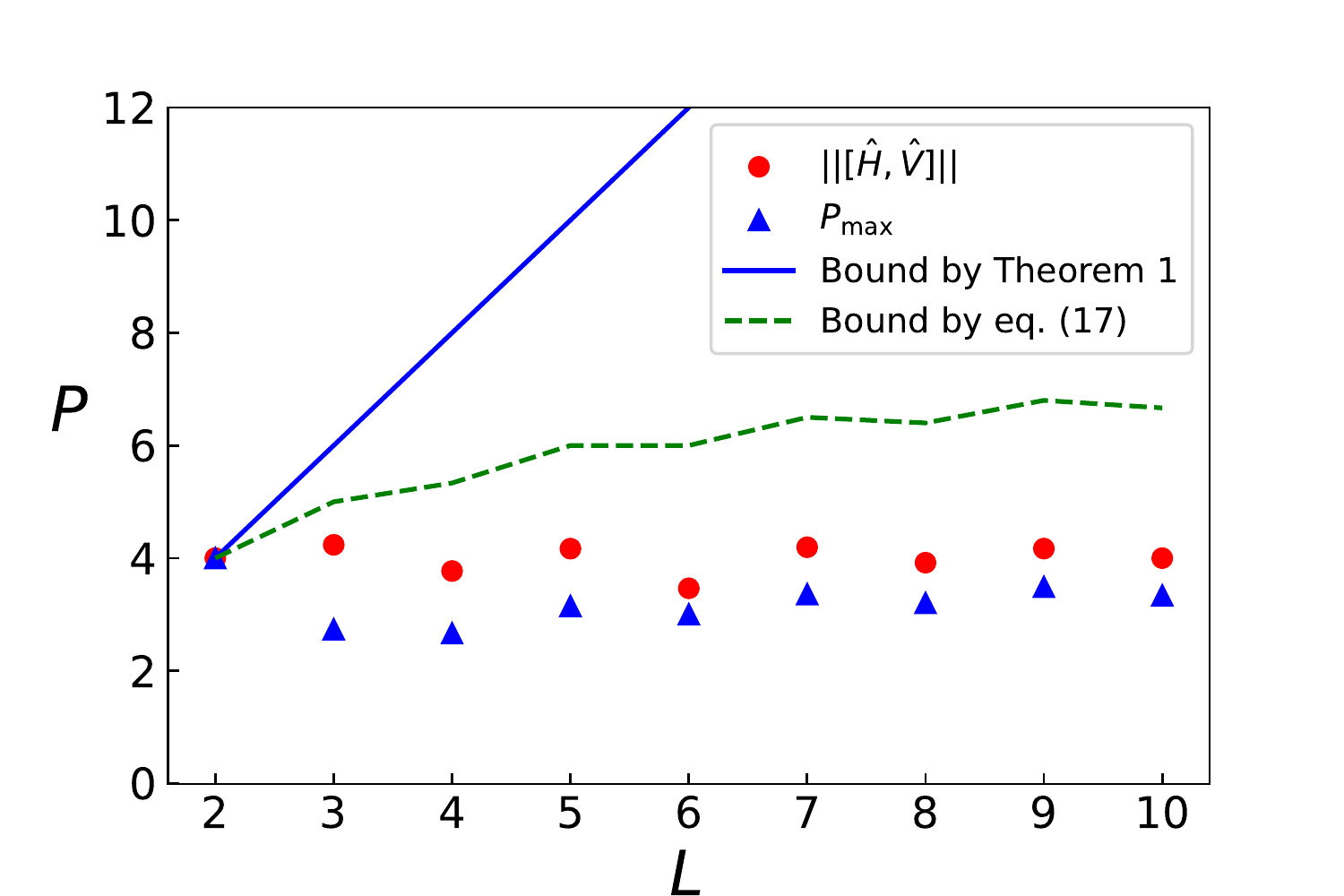}\\
    \ \ (b)
    \caption{(a) Maximum power $P_{\max}$ and maximum operator norm of the commutator $\norm{[\hamz, \, \hamd]}$ as a function of $L$ (maximized over all time and 500 realizations of disorder for each value of $L$), starting from the ground state of initial Hamiltonian $\hamz = \sum_{l=1}^{L}h\hat \sigma^z_l$ and charged by driving Hamiltonian as in eq.~\eqref{eq:driving_Hamiltonian_random}. For this example, we fixed $h =1$. We observe a slight decrease in $P_{\max}$ and $\normt{[\hamz,\hamd]}$ with growing $L$, which is a result of increasing dimensionality of the system, resulting in the lower chance of the of the initial state to be optimal. While keeping the number of realizations of different driving fixed at 500, which means that upper the bound is more difficult to reach. (b) The same as (a) but for a driving given by eq.~\eqref{eq:driving_Hamiltonian_2}. For this example, we additionally fixed $V = 1$.}
\label{fig:figure1}
\end{center}
\end{figure}
\begin{corollary}\label{corollary:1.1}
For initial and driving Hamiltonians~\eqref{eq:initial_lattice_Hamiltonian} and~\eqref{eq:interaction_Hamiltonian}, the instantaneous charging power is bounded as,
\begin{equation}\label{eq:corollary1.1}
\abs{P(t)}\leq k\norm{\hat{H}_{\mathrm{s}}-E_{\mathrm{s}\min}}\norm{\hamd(t)-v_{\min}(t)}/2.
\end{equation}
where $E_{\mathrm{s}\min}$ is the single cell ground state energy.
\end{corollary}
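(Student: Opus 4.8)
The plan is to obtain the corollary as a direct specialization of Theorem~\ref{theorem:1}. Since that theorem already gives $\abs{P(t)}\leq \Delta E\,\norm{\hamd(t)-v_{\min}(t)}/2$ for \emph{any} Hamiltonian pair obeying eq.~\eqref{eq:property}, and since $v_{\min}(t)$ carries over unchanged, it suffices to show that the lattice structure of eqs.~\eqref{eq:initial_lattice_Hamiltonian}--\eqref{eq:interaction_Hamiltonian} forces the energy window to satisfy $\Delta E\leq k\,\norm{\hat{H}_{\mathrm{s}}-E_{\mathrm{s}\min}}$. Substituting this into eq.~\eqref{eq:main_teorem} reproduces eq.~\eqref{eq:corollary1.1} immediately, because $\norm{\hamd(t)-v_{\min}(t)}/2\geq 0$ preserves the inequality under enlarging the prefactor.

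To bound $\Delta E$, first I would fix a convenient eigenbasis of $\hamz$. Because $\hamz=\sum_{l}\haml$ is a sum of mutually commuting single-site terms, its eigenstates may be taken as product states $\ket{E_j}=\ket{a_1}\otimes\cdots\otimes\ket{a_L}$, where each $\ket{a_l}$ is an eigenstate of $\hat{H}_{\mathrm{s}}$ with single-site eigenvalue $\epsilon_{a_l}$, so that $E_j=\sum_{l}\epsilon_{a_l}$. I would then note that $\hat{H}_{\mathrm{s}}-E_{\mathrm{s}\min}$ is positive semidefinite with largest eigenvalue $E_{\mathrm{s}\max}-E_{\mathrm{s}\min}$, so its operator norm equals the full single-cell spectral width, $\norm{\hat{H}_{\mathrm{s}}-E_{\mathrm{s}\min}}=E_{\mathrm{s}\max}-E_{\mathrm{s}\min}$, where $E_{\mathrm{s}\max}$ is the largest single-site eigenvalue.

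The central step is to translate the locality of the driving into the energy-window constraint. Consider two product eigenstates $\ket{E_j}=\ket{a_1,\dots,a_L}$ and $\ket{E_m}=\ket{b_1,\dots,b_L}$. For a single term $\hat{V}_{\bi}(t)$ with $\bi=(i_1,\dots,i_n)$ and $n\leq k$, the identity action on every site $l\notin\bi$, formalized by $[\hat{M}^{(l)},\hat{V}_{\bi}(t)]=0$, forces the matrix element $\bra{E_j}\hat{V}_{\bi}(t)\ket{E_m}$ to vanish unless $a_l=b_l$ for all $l\notin\bi$. When it does not vanish the energy difference is carried entirely by the at most $k$ active cells, namely $E_j-E_m=\sum_{l\in\bi}(\epsilon_{a_l}-\epsilon_{b_l})$, and each summand obeys $\abs{\epsilon_{a_l}-\epsilon_{b_l}}\leq E_{\mathrm{s}\max}-E_{\mathrm{s}\min}$, so that $\abs{E_j-E_m}\leq k\,\norm{\hat{H}_{\mathrm{s}}-E_{\mathrm{s}\min}}$. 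Summing over $\bi\in K(L,k)$ preserves this support structure, hence the full element $V_{jm}=\bra{E_j}\hamd(t)\ket{E_m}$ vanishes whenever $\abs{E_j-E_m}>k\,\norm{\hat{H}_{\mathrm{s}}-E_{\mathrm{s}\min}}$. By the definition of $\Delta E$ as the smallest admissible window in eq.~\eqref{eq:property}, this gives $\Delta E\leq k\,\norm{\hat{H}_{\mathrm{s}}-E_{\mathrm{s}\min}}$, and the corollary follows.

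I expect the main obstacle to be bookkeeping rather than conceptual. One must argue carefully that, in the presence of spectral degeneracies of $\hamz$, the product basis is a legitimate eigenbasis in which eq.~\eqref{eq:property} is evaluated, and that the support argument survives the sum over all index sets $\bi$ without distinct terms conspiring to couple more distant energy levels. Since each $\hat{V}_{\bi}(t)$ individually respects the $k$-cell support, and the worst case is already saturated by a single term acting on $k$ cells, each toggled between the single-site ground and most excited state, no such conspiracy can widen the window beyond $k\,\norm{\hat{H}_{\mathrm{s}}-E_{\mathrm{s}\min}}$.
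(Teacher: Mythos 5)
Your proposal is correct and follows essentially the same route as the paper's own proof: both reduce the corollary to showing $\Delta E\leq k\norm{\hat{H}_{\mathrm{s}}-E_{\mathrm{s}\min}}$ by working in the product eigenbasis of $\hamz$, using the locality condition $[\hat{M}^{(l)},\hat{V}_{\bi}]=0$ with single-site projectors to kill matrix elements between configurations differing outside $\bi$, and bounding the residual energy difference by the triangle inequality over the at most $k$ active cells. The only cosmetic difference is that you argue the contrapositive of the paper's contradiction argument (nonzero element implies small energy jump, rather than large energy jump implies at least $k+1$ differing sites implies zero element), which is logically equivalent.
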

The result is proven by showing that the maximum energy jump $\Delta E$ in this case is given by $k\norm{\hat{H}_\mathrm{s}}$.
As shown in the Supplementary Material, the corollary then follows directly from Theorem~\ref{theorem:1}.


The consequences of the bound in eq.~\eqref{eq:corollary1.1} are remarkable. In particular, it rules out the possibility of having extensive quantum charging advantage without global charging operations. 

To show that, we need to discuss each of these terms separately: as explained above, $k$ is the number of cells being coupled together by the driving, and thus $k\leq L$. $\normt{\hat{H}_\mathrm{s}-E_{\mathrm{s}\min}}$ is a number that depends on particulars of a single cell but does not scale with $L$. The last term, $\norm{\hamd(t)-v_{\min}(t)}$, which we call \emph{potential} (in analogy with electric circuits), can be in principle made arbitrarily large. Physically, this would correspond to investing a very large/infinite energy into the driving. With larger driving energy, the charging is faster. Thus, to compute the quantum charging advantage, we need to compare the parallel and quantum scaling on an equal footing, by assuming that the energy scale that is invested into the driving is the same in both cases. 
We do that by fixing the potential of the quantum driving to be at most equal to the potential of the parallel driving, $\normt{\hamd-v_{\min}}\leq \normt{\hamd^\parallel-v_{\min}^\parallel}$. This is the constraint C0, introduced and argued for in Ref.~\cite{Campaioli_2017}. 

Parallel charging is given by $k=1$ in driving Hamiltonian~\eqref{eq:interaction_Hamiltonian}, while the initial state is assumed to be a product state $\R=\R_{\mathrm{s}}^{\otimes L}$. Thus, in the parallel charging scenario the driving affects each cell independently. From this, we easily calculate that both the potential $\normt{\hamd^\parallel-v_{\min}^\parallel}=L\normt{\hamd_{\mathrm{s}}^\parallel-v_{{\mathrm{s}}\min}^\parallel}$ and the charging power $P^\parallel=LP_{\mathrm{s}}^\parallel$ scale linearly with $L$. $\R_{\mathrm{s}}$, $\normt{\hamd_{\mathrm{s}}^\parallel-v_{{\mathrm{s}}\min}^\parallel}$, and $P_{\mathrm{s}}^\parallel$ denote the state, potential, and charging power of a single cell, respectively.

Combining eq.~\eqref{eq:corollary1.1}, constraint C0, and the results for parallel charging, we bound the quantum advantage as
\[
|\Gamma|=\frac{\left|P\right|}{\left|P^\parallel\right|}\leq \frac{k\normt{\hat{H}_{\mathrm{s}}-E_{\mathrm{s}\min}} \normt{\hamd_{\mathrm{s}}^\parallel-v_{{\mathrm{s}}\min}^\parallel} L}{2\abs{P_{\mathrm{s}}^\parallel} L}=\gamma k,
\]
where $\gamma$ is $L$ and $k$-independent. Thus, the quantum advantage scales with the maximum number $k$ of cells that are coupled together by $\hamd$. If this number \textit{does not} scale with the lattice size, $L$, then the quantum advantage cannot scale with $L$, as extensively foreshadowed in the Introduction. The extensive scaling is possible only for global interactions, $k\propto L$. By showing that the only source of the quantum advantage comes from the global entangling operations, we showed that the advantage comes from genuine quantum effects. This addresses the discussion of the role of quantum-ness posed in relation to the bound on charging power found in~\cite{Juli_Farr__2020}.

Finally, we ask what is the maximal scaling of power with $L$ that a quantum charging protocol can achieve. Clearly, using Eq.~\eqref{eq:corollary1.1}, the maximum charging power is given by the product of $k$, and by whatever scaling can be constructed from $\normt{\hamd-v_{\min}}$ (for now leaving constraint $C0$ behind). It is possible to artificially construct some driving Hamiltonians that scale super-extensively, i.e., with higher powers of $L$~\cite{Le_2018,PhysRevLett.120.117702}. However, such models are unphysical~\cite{Juli_Farr__2020,Rossini_2020}, because they would lead to a free energy that is super-extensive in the thermodynamic limit. Therefore, for any physical model, considering extensive energy $\normt{\hamd-v_{\min}}\sim L$, the maximal charging power scales at most quadratically $P\sim L^2$, for global operations $k\propto L$. (Compare with the linear scaling of parallel charging.) 

As an illustrative example, we study charging of a quantum battery by means of a driving Hamiltonian obtained via a simple generalization of the celebrated SY Hamiltonian \cite{PhysRevLett.70.3339}, \textit{i.e.} a random, $2$-local, \textit{all-to-all} Hamiltonian
\[\label{eq:driving_Hamiltonian_random}
\hamd=C \sum_{i < j}^L \sum_{\alpha = x,\, y,\, z} J_{ij}^{\alpha} \hat{\sigma}_i^\alpha \hat{\sigma}_j^\alpha,
\]
where the coupling constants $J_{ij}^{\alpha}$ are randomly extracted from a normal distribution and the normalization factor $C$ is chosen such that $\norm{\hamd - v_{\mathrm{min}}} = 2$, to ensure a fair comparison between different realizations (instances). ($C\propto L^{-3/2}$, which for the SY Hamiltonian follows from the replica formalism~\cite{Bray_1980,PhysRevLett.70.3339}. We numerically confirm this scaling in the Supplementary material.) The results are shown in Fig.~\ref{fig:figure1} (a). We clearly see that the power is bounded by the degree of $k$-locality and \textit{not} by the participation number. As a result, we do not find any extensive charging advantage for this model as expected.
Interestingly, we observe that both the maximum power as well as the maximum value of the commutator norm $$\norm{[\hamz, \, \hamd]}$$ slightly decreases with the system size, $L$.
This is a finite size effect which sensitively reduces by further increasing the system size.
We present an analysis of this phenomenon in the Supplemental Material.




\emph{Does global charging always lead to an extensive quantum advantage?---}
The presence of a global charging term in $\hamd$ does not guarantee an extensive charging advantage.

As an example, consider a battery composed of $L$ qubits having initial Hamiltonian $\hamz=\sum_{l=1}^{L}h\hat \sigma^z_l$ and charged via the following driving
\[\label{eq:driving_Hamiltonian_2}
\hamd=\frac{V}{\floor{L/2}+1}\big(\sum_{l=\textrm{odd}}\hat \sigma_l^x\otimes\hat \sigma_{l+1}^x+\bigotimes^L_{l=1}\hat \sigma_l^x\big),
\]
with $V$ being a constant. 
From Theorem~\ref{theorem:1}, we obtain
$P\leq 2Lh\norm{\hamd}=2L h V$ (using $\normt{\hat{H}_\mathrm{s}-E_{\mathrm{s}\min}}=2h$), 
due to the second term representing a global operation, which couples all of the sites at the same time. Presence of this global charging term suggests  possibility of an extensive charging advantage.

However, in this case an extensive quantum advantage is not reached. This is because the nearest-neighbor terms $\sum_{l=\textrm{odd}}\hat \sigma_l^x\otimes \hat \sigma_{l+1}^x$, which provide \textit{non-extensive} advantage, dominate the interaction. They contribute~$V/(1+1/\floor{L/2})$ while the global term contributes only~$V/(\floor{L/2}+1)$ to the total norm $\norm{\hamd}$. This sub-extensive scaling is confirmed by the following, alternative, inequality, which is derived in the Supplementary material,
\[\label{eq:new_inequality}
\abs{P(t)}\leq \sum_{k=1}^{L} k\norm{\hamd_k}\norm{\hat{H}_\mathrm{s}-E_{\mathrm{s}\min}},
\]
where $\hamd_k$ is the $k$-local part of $\hamd$. From inequality~\eqref{eq:new_inequality} we obtain 
$P\leq 4\normt{\hamd_2}h+2L\normt{\hamd_L}h=(\frac{4}{1+1/\floor{L/2}}+\frac{2L}{\floor{L/2}+1})Vh\approx 8Vh$ for the present example, which indeed confirms that the power does not display an extensive advantage. As a further confirmation, we explicitly computed the maximum charging power for this driving Hamiltonian \eqref{eq:driving_Hamiltonian_2}, reported in Fig.~\ref{fig:figure1} (b). We clearly see that the charging power stays well-below the threshold given by eq.~\eqref{eq:new_inequality}.

\emph{Discussion and conclusions.---}
We found a bound on the maximum charging power which can be achieved by charging a quantum battery via an external quench protocol. 

This bound shows that the maximum charging power is not dependent on the operator norm of the battery Hamiltonian, by which the amount of charged energy is measured. Instead, it is governed by the maximum energy difference, $\Delta E$, between two eigenstates of the battery Hamiltonian for which the driving Hamiltonian has a non-vanishing matrix element. In other words, the charging power is limited by the amount of energy that the driving Hamiltonian can add into the battery in a single step, a result which \textit{a posteriori} seems very natural. This bound can be applied to a general quantum battery, described by any Hamiltonian, even those of interacting quantum cells.

When applied to quantum batteries made of $L$ identical cells, this bound provides a limit on how fast they can be charged as compared to classical batteries. The maximum speed by which a quantum battery can be charged depends only on the number $k$ of cells interacting together in a single term. It \emph{does not} depend on the participation number, which is the number of independent terms in the driving Hamiltonian in which a single cell appears. For example, pair-wise interactions can provide a quantum speedup by at most a factor of two,  even in the case of all-to-all couplings, where every cell is connected to every other cell. For a speedup of a factor of $k$, one needs to consider k-cell interactions, while the maximal speedup of $L$ is achieved for $L$-particle interactions. While charging power of classical batteries scales linearly with the number of cells ($\propto L$), quantum batteries provide at most quadratic scaling in charging power ($\propto L^2$). This quadratic scaling cannot be reached without global operations. However, the mere presence of global charging operations does not always guarantee an extensive charging advantage, as we demonstrated on an explicit example.


This work adds to other  results, in which quantum systems provide at most quadratic improvement over the known classical method, like the Heisenberg limit in sensitivity scaling in quantum metrology over the classically achievable shot-noise limit~\cite{giovannetti2004quantum,giovannetti2011advances,demkowicz2012elusive}, and Grover's search algorithm~\cite{grover1996fast}, which is known to be asymptotically optimal~\cite{bennett1997strengths}.

The bound specifies, for a given battery Hamiltonian and for a given driving, the maximum instantaneous charging power achievable in that particular setup.
It does not give any information about the quantum state for which such a power can be achieved.
This constitutes an interesting question for future research.

\emph{Acknowledgements.—--}
We acknowledge the support by the Institute
for Basic Science in Korea (IBS-R024-Y2
 and IBS-R024-D1). 
DR would like to thank M.~Carrega, J.~Kim, J.~Murugan and J. Olle for collaboration on related projects.

%
    
\clearpage

\setcounter{section}{0}
\setcounter{theorem}{0}
\setcounter{corollary}{0}

\section*{Supplemental Material for: \\
``Quantum Charging Advantage Cannot Be Extensive Without Global Operations''
}

\section{\label{sec:level1}PROOF OF theorem 1}
Let us consider an initial Hamiltonian, $\hamz$, which is assumed to be time independent, having spectral decomposition $\hamz=\sum_{j=1}^{N} E_{j}\ket{E_{j}}\bra{E_j}$, where the energies $E_j$ are ordered, \textit{i.e.} $E_j \leq E_m$ for $j < m$.
Let us consider a different Hamiltonian $\hamd$, which we call the driving Hamiltonian, which we will write in basis of the initial Hamiltonian as $\hamd=\sum_{j , m=1}^{N} V_{jm}\ket{E_{j}}\langle E_m\rvert$. $\Delta E$ is the minimum number such that for all $j$ and $m$,
\[\label{eq:Vdefinition}
\mathrm{when}\quad | E_{j}-E_{m} | >\Delta E,\quad \mathrm{then}\quad V_{jm} = 0.
\]
We also consider the spectral decomposition of the driving Hamiltonian as $\hamd=\sum_{\alpha}\alpha\pro{\alpha}{\alpha}$. 
Out of $\hamz$ and $\hamd$ we define other two operators $\hat{h}(e)$ and $\hat{v}(e)$, functions of a continuous variable $e$ satisfying $0 < e \leq \Delta E$. Explicitly, the definition of $\hat{h}(e)$ takes the following form
\begin{eqnarray}
\hamh{e}  &=& \sum_{j=1}^{N} h_{j} (e)\ket{E_{j}}\bra{E_{j}} \ ,
\end{eqnarray}
where the functions $h_{j} (e)$ are defined as follows
\begin{eqnarray}\label{eq:definitionhj}
h_{j}(e) =\begin{cases}
\frac{1}{2}\quad &\mathrm{for}\ \floor{\frac{E_{j}-e}{\Delta E}}\ \mathrm{is\ odd}\\
-\frac{1}{2}\quad &\mathrm{for}\ \floor{\frac{E_{j}-e}{\Delta E}}\ \mathrm{is\ even}\\
\end{cases}.
\end{eqnarray}

By definition, $\hat{h}(e)$ has the same eigenvectors as $\hamz$ but its eigenvalues are restricted to be either $-\frac{1}{2}$ or $\frac{1}{2}$. The eigenvalues of $\hat{h}(e)$, as functions of $e$, are depicted in Fig.~\ref{Fig:distribution_zeros}.
In turn, $\hamv{e}$ is defined using $\hamh{e}$ as follows
\begin{eqnarray}\label{eq:vdefinition}
\hamv{e} &=& \sum_{j,m=1}^{N} v_{jm}(e)\ket{E_{j}}\bra{E_{m}}.\\
v_{jm}(e)&=&\begin{cases}
V_{jm}\ \quad\,\,\mathrm{for}\ (j-m)(h_{j}(e)-h_{m}(e))\geq 0\\
-V_{jm}\quad\mathrm{for}\ (j-m)(h_{j}(e)-h_{m}(e))< 0
\end{cases}\notag
\end{eqnarray}
By their definition, we can deduce the following lemmas for $\hamv{e}$ and $\hamh{e}$.

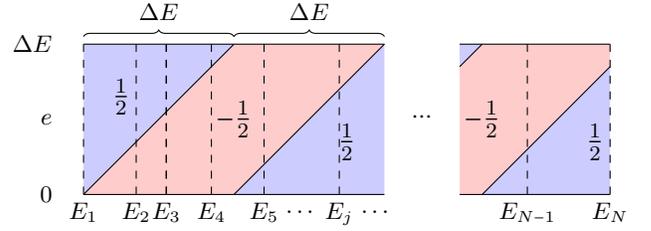
\begin{figure}[t]
\begin{center}
\begin{tikzpicture}{\label{fig:spectrumhl}}
\fill[fill=blue, fill opacity=0.2] (0, 0) -- (0, 2) --  (2, 2)  -- cycle;
\fill[fill=red, fill opacity=0.2] (0, 0) -- (2, 0) -- (4,2) -- (2, 2)  -- cycle;
\fill[fill=blue, fill opacity=0.2] (2, 0) -- (4, 0) --  (4, 2)  -- cycle;
\fill[fill=blue, fill opacity=0.2] (5, 2) -- (5.3, 2) -- (5,1.7) -- cycle;
\fill[fill=red, fill opacity=0.2] (5, 0) -- (5.3, 0) -- (7,1.7) -- (7, 2) -- (5.3,2) -- (5,1.7) -- cycle;
\fill[fill=blue, fill opacity=0.2]  (5.3, 0) -- (7,0) -- (7, 1.7) -- cycle;
\node[left] at (-0.3,0) {$0$};
\node[left] at (-0.3,2) {$\Delta E$};
\node[left] at (-0.3,1) {$e$};
\node at (4.5,1) {$\cdots$};
\draw[decorate, decoration = {brace}] (0,2.1) -- (2,2.1);
\draw[decorate, decoration = {brace}] (2,2.1) -- (4,2.1);
\node[above] at (1,2.2) {$\Delta E$};
\node[above] at (3,2.2) {$\Delta E$};
\draw (0,0) -- (4,0);
\draw (0,2) -- (4,2);
\draw (5,0) -- (7,0);
\draw (5,2) -- (7,2);
\draw (0,0) -- (2,2);
\draw (2,0) -- (4,2);
\draw (7,1.7) -- (5.3,0);
\draw (5.3,2) -- (5,1.7);
\node[below] at (0,0) {$E_1$};
\node[below] at (0.7,0) {$E_{2}$};
\node[below] at (1.1,0) {$E_{3}$};
\node[below] at (1.7,0) {$E_{4}$};
\node[below] at (2.4,0) {$E_{5}$};
\node[below] at (2.9,-0.1) {$\dots$};
\node[below] at (3.4,0) {$E_{j}$};
\node[below] at (3.9,-0.1) {$\dots$};
\node[below] at (7,0) {$E_{N}$};
\node[below] at (5.9,0) {$E_{N-1}$};
\draw[dashed] (0.7,0) -- (0.7,2);
\draw[dashed] (1.1,0) -- (1.1,2);
\draw[dashed] (1.7,0) -- (1.7,2);
\draw[dashed] (2.4,0) -- (2.4,2);
\draw[dashed] (3.4,0) -- (3.4,2);
\draw[dashed] (1.1,0) -- (1.1,2);
\draw[dashed] (0,0) -- (0,2);
\draw[dashed] (7,0) -- (7,2);
\draw[dashed] (5.9,0) -- (5.9,2);
\node at (0.5,1.3) {{\large $\frac{1}{2}$}};
\node at (2,1) {{\large $-\frac{1}{2}$}};
\node at (3.5,0.7) {{\large $\frac{1}{2}$}};
\node at (5.3,1) {{\large $-\frac{1}{2}$}};
\node at (6.8,0.7) {{\large $\frac{1}{2}$}};
\end{tikzpicture}
\caption
{Values $\hat{h}_j(e)$ ($1/2$ as the blue region, and $-1/2$ as the red region) as a function of index $j$, which relates to eigenstate energy $E_j$ depicted on the $x$-axis, and of parameter $e$ depicted on the $y$-axis. Condition~\eqref{eq:Vdefinition} means that for energies $E_j$ and $E_m$ that are further than $\Delta E$, the corresponding element of the driving Hamiltonian $V_{jm}$ is zero.}
\label{Fig:distribution_zeros}
\end{center}
\end{figure}

\begin{lemma}\label{lemma:1}
We can express the commutator between the initial and the driving Hamiltonians using $\hamv{e}$ and $\hamh{e}$ as follows:
\begin{eqnarray}
\left[\hamz,\hamd\right]&=&\int_{e=0}^{\Delta E}{[\hamh{e},\hamv{e}]de}.\label{eq:statement3}
\end{eqnarray}
\end{lemma}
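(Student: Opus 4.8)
The plan is to verify the operator identity~\eqref{eq:statement3} entrywise in the eigenbasis $\{\ket{E_j}\}$ of $\hamz$. Since both $\hamz$ and $\hamh{e}$ are diagonal in this basis, expanding the two commutators is immediate: the $(j,m)$ matrix element of the left-hand side is $(E_j-E_m)V_{jm}$, while the $(j,m)$ element of the integrand on the right-hand side is $\big(h_j(e)-h_m(e)\big)v_{jm}(e)$. The whole lemma therefore reduces to establishing, for every pair $(j,m)$,
$$(E_j-E_m)V_{jm}=\int_0^{\Delta E}\big(h_j(e)-h_m(e)\big)\,v_{jm}(e)\,de.$$

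First I would dispose of the sign bookkeeping built into definition~\eqref{eq:vdefinition}. Because each $h_j(e)\in\{-\tfrac12,\tfrac12\}$, the difference $h_j(e)-h_m(e)$ only takes the values $0,\pm1$. A short case split on the sign of $(j-m)\big(h_j(e)-h_m(e)\big)$ then shows that the sign flip in $v_{jm}(e)$ is exactly tuned so that
$$\big(h_j(e)-h_m(e)\big)\,v_{jm}(e)=\mathrm{sign}(j-m)\,V_{jm}\,\mathbbm{1}\!\left[h_j(e)\neq h_m(e)\right].$$
That is, the product collapses to a single $e$-independent constant on the set where the two diagonal labels disagree, and vanishes where they agree. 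This turns the integral into a pure length computation: it remains to evaluate the measure of $\{e\in[0,\Delta E]:h_j(e)\neq h_m(e)\}$.

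Computing this measure is the heart of the argument and the step I expect to be the main obstacle. By~\eqref{eq:definitionhj}, $h_j(e)\neq h_m(e)$ precisely when $\floor{(E_j-e)/\Delta E}$ and $\floor{(E_m-e)/\Delta E}$ have opposite parity. Taking without loss of generality $E_j\geq E_m$, the difference of these two floors equals the number of integer multiples of $\Delta E$ lying in the window $(E_m-e,\,E_j-e]$, whose width is $E_j-E_m$. Here the hypothesis~\eqref{eq:Vdefinition} is indispensable: whenever $V_{jm}\neq0$ we have $E_j-E_m\leq\Delta E$, so this width is at most one period, the two floors differ by either $0$ or $1$, and ``opposite parity'' becomes equivalent to ``differ by exactly $1$.'' The resulting indicator is periodic in the shifted variable with period $\Delta E$ and occupies length exactly $E_j-E_m$ per period, so its integral over the full period $[0,\Delta E]$ equals $E_j-E_m$. (If instead $E_j-E_m>\Delta E$, then $V_{jm}=0$ and both sides vanish, so nothing is lost.)

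Finally I would assemble the pieces. Substituting the length $E_j-E_m$ back gives
$$\int_0^{\Delta E}\big(h_j(e)-h_m(e)\big)\,v_{jm}(e)\,de=\mathrm{sign}(j-m)\,V_{jm}\,|E_j-E_m|=(E_j-E_m)V_{jm},$$
where the last equality uses that $\mathrm{sign}(j-m)$ agrees with the sign of $E_j-E_m$ under the assumed ordering of the spectrum (and the diagonal case $j=m$ is trivial, both sides being zero). As this is exactly the $(j,m)$ matrix element of $[\hamz,\hamd]$ and the identity holds for all $j,m$, the operator identity~\eqref{eq:statement3} follows.
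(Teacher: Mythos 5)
Your proposal is correct and follows essentially the same route as the paper's proof: both reduce the operator identity to the entrywise claim $(E_j-E_m)V_{jm}=\int_0^{\Delta E}(h_j(e)-h_m(e))v_{jm}(e)\,de$, observe that the sign convention in $v_{jm}(e)$ collapses the integrand to a constant times the indicator of $\{e:h_j(e)\neq h_m(e)\}$, and then show that set has measure $E_j-E_m$ when $|E_j-E_m|\leq\Delta E$. The only difference is cosmetic: the paper computes that measure by writing $E_m=n_m\Delta E+x_m$ and splitting into the cases $n_j=n_m$ and $n_j=n_m+1$, whereas you count multiples of $\Delta E$ in a sliding window and invoke periodicity, which is a somewhat cleaner packaging of the same calculation.
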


\begin{proof}\label{pro:lem_1}
Let us write explicitly the commutator of initial Hamiltonian ,$\hamz$, with driving Hamiltonian ,$\hamd$, as  
\begin{eqnarray}\label{eq:commuteHV}
\left[\hamz,\hamd\right]&=&\sum_{j,m=0}^{N}{[\hamz,V_{jm} \ket{E_{j}}\bra{E_{m}}]}\notag\\
&=&\sum_{j,m=0}^{N}(E_{j}-E_{m})V_{jm} \ket{E_{j}}\bra{E_{m}}
\end{eqnarray}
as well as the commutator of operator $\hamh{e}$ with operator $\hamv{e}$ 
\[\label{eq:commutehv}
\begin{split}
&\int_{e=0}^{\Delta E} [\hamh{e},\hamv{e}] de\\
&=\int_{e=0}^{\Delta E}\sum_{j,m=0}^{N} [\hamh{e},v_{jm}(e)\ket{E_{j}} \bra{E_{m}}]de\\
&=\sum_{j,m=0}^{N}\bigg(\int_{e=0}^{\Delta E}{(h_{j}(e)-h_{m}(e))v_{jm}(e)de}\bigg)\ket{E_{j}}\bra{E_{m}}.
\end{split}
\]
We prove this lemma by comparing elements \[a_{jm}=(E_{j}-E_{m})V_{jm}\] in eq.~\eqref{eq:commuteHV} and \[b_{jm}=\int_{e=0}^{\Delta E}{(h_{j}(e)-h_{m}(e))v_{jm}(e)de}\] eq.~\eqref{eq:commutehv}, and showing that they are equal.

When a pair of $j,m$ satisfies $|E_{j}-E_{m}|>\Delta E$, $V_{jm}$ and $v_{jm}(e)$ are zero by definition(eqs.~\eqref{eq:Vdefinition} and~\eqref{eq:vdefinition}), which means that the corresponding elements $a_{jm}$ and $b_{jm}$ are both zero and thus equal. 

In the opposite case, \textit{i.e.} when  $|E_{j}-E_{m}| \leq \Delta E$, $V_{jm}$ and $v_{jm}(e)$ may not be zero. 
In this case, we explicitly have
\[
\begin{split}
&(h_{j}(e)-h_{m}(e))v_{jm}(e)\label{eq:hlhvinde}\\
&=\begin{cases}
(h_{j}(e)-h_{m}(e))V_{jm} &\mathrm{for}\ j> m,\,h_{j}(e)>h_{m}(e)\\
(h_{m}(e)-h_{j}(e))V_{jm} &\mathrm{for}\ j>m,\,h_{j}(e)<h_{m}(e)\\
(h_{m}(e)-h_{j}(e))V_{jm} &\mathrm{for}\ j< m,\,h_{j}(e)>h_{m}(e)\\
(h_{j}(e)-h_{m}(e))V_{jm} &\mathrm{for}\ j< m,\,h_{j}(e)<h_{m}(e)\\
0 &\mathrm{for}\ h_{j}(e)=h_{m}(e)
\end{cases}\\
&=\begin{cases}
\ V_{jm}&\mathrm{for}\ j>m,\ h_{j}(e)\neq h_{m}(e)\\
-V_{jm}&\mathrm{for}\ j<m, \ h_{j}(e)\neq h_{m}(e)\\
0&\mathrm{for}\ h_{j}(e)= h_{m}(e)
\end{cases}
\end{split}
\]

For a fixed $j,m$, this equation implies that the integrand $(h_{j}(e)-h_{m}(e))v_{jm}(e)$ is a piecewise constant function, which is equal to either $0$ and $V_{jm}$ when $j > m$, or $0$ and $-V_{jm}$ when $j < m$. To figure out at which points this function jumps, we need to take a look at the definition of $h_{m}(e)$, eq.~\eqref{eq:definitionhj}. Then to calculate $b_{jm}$, we just need to integrate over this piecewise function, which is straightforward once we know where it jumps.

In order to identify this jump, we need to determine in which region, in the interval $0\leq e<\Delta E$, $h_{j}(e)$ and $h_{m}(e)$ are not equal. 
First, we assume that $j > m$. Hence, $E_j$ is also bigger than $E_m$, since we are assuming that $E_j$ are ordered. Let us rewrite $E_m$  as 
\begin{eqnarray}
\label{eq:integer_decomposition}
E_m = n_m\Delta E+x_m
\end{eqnarray}
for the maximal integer $n_m$ such that $x_m$ is a real number in the interval $0\leq x<\Delta E$. 
This shows
\[
\floor{\frac{E_m-e}{\Delta E}}=\floor{\frac{n_m\Delta E+x_m-e}{\Delta E}}=
\begin{cases}
n_m &\mathrm{for\ } x_m\geq e\\
n_m-1 &\mathrm{for\ } x_m< e
\end{cases}
\]
which implies that $h_m$ is a piecewise constant function made from two pieces with the jump at $e=x$.

Given that $E_j$ and $E_m$ differ at most by $\Delta E$, we conclude that in the decomposition 
\[
E_j = n_j \Delta E + x_j \ ,
\]
$n_j$ can be either equal to $n_m$ or $n_m + 1$.
Let us first consider the case $n_j = n_ m$. In this case (case 1), we have
\[
\floor{\frac{E_j-e}{\Delta E}}=\floor{\frac{n_m\Delta E+x_j-e}{\Delta E}}=
\begin{cases}
n_m &\mathrm{for\ } x_j\geq e\\
n_m-1 &\mathrm{for\ } x_j< e
\end{cases}
\]
Noticing that our assumption $E_j>E_m$ implies that $x_j>x_m$, we have
\[
h_{j}(e)-h_{m}(e)=
\begin{cases}
0 &\mathrm{for\ } e \leq x_m\\
(-1)^{n_m+1} &\mathrm{for\ } x_m < e \leq x_j \\
0 &\mathrm{for\ } e > x_j
\end{cases}
\]
In the opposite case (case 2), \textit{i.e.} when $n_j = n_m + 1$, our assumption $E_j>E_m$ implies $x_j<x_m$, and we have
\[
h_{j}(e)-h_{m}(e)=
\begin{cases}
(-1)^{n_m} &\mathrm{for\ } e \leq x_j\\
0 &\mathrm{for\ } x_j < e \leq x_m \\
(-1)^{n_m+1} &\mathrm{for\ } e > x_m
\end{cases}
\]
Then to get the full integrand, we plug in the $v_{jm}(e)$ again, whose sole role is to transform the negative sign into the positive sign and to add $V_{jm}$. This in case 1 gives
\[
(h_{j}(e)-h_{m}(e))v_{jm}(e)=
\begin{cases}
0 &\mathrm{for\ } e \leq x_m\\
V_{jm} &\mathrm{for\ } x_m < e < x_j \\
0 &\mathrm{for\ } e \geq x_j
\end{cases}
\]
while in case 2 we obtain
\[
(h_{j}(e)-h_{m}(e))v_{jm}(e)=
\begin{cases}
V_{jm} &\mathrm{for\ } e \leq x_j\\
0 &\mathrm{for\ } x_j < e < x_m \\
V_{jm} &\mathrm{for\ } e \geq x_m
\end{cases}
\]
In case 1, we compute
\[
b_{jm}=V_{jm}(x_j-x_m)=V_{jm}(E_j-E_m)=a_{jm}.
\]
In case 2, we compute
\[
b_{jm}=V_{jm}(x_j+\Delta E-x_m)=V_{jm}(E_j-E_m)=a_{jm}.
\]
In both cases, we have shown that $a_{jm}=b_{jm}$. The case of $m>j$ follows from identical arguments. This concludes the proof.

\end{proof}

\begin{lemma}\label{lemma:2}
Operator norms of $\hamv{e}$ and $\hamd$ are equal for every $e$, expressed mathematically as
\begin{eqnarray}
\norm{\hamv{e}}&=&\norm{\hamd}\label{eq:statement2}.
\end{eqnarray}
\end{lemma}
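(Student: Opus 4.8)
The plan is to exhibit $\hamv{e}$ as a unitary conjugate of $\hamd$ by a diagonal operator whose entries are signs $\pm 1$. Since the operator norm is invariant under unitary conjugation, this immediately yields $\norm{\hamv{e}}=\norm{\hamd}$. Concretely, writing $n_j(e)=\floor{(E_j-e)/\Delta E}$ for the band index of level $j$ (so that $h_j(e)=(-1)^{n_j(e)+1}/2$), I would introduce the diagonal operator $\hat{D}(e)=\sum_{j}s_j(e)\,\ket{E_j}\bra{E_j}$ with $s_j(e)=(-1)^{\floor{n_j(e)/2}}\in\{+1,-1\}$. Because $\hat{D}(e)$ is real, diagonal and squares to the identity, it is unitary with $\hat{D}(e)=\hat{D}(e)^\dagger=\hat{D}(e)^{-1}$. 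The whole statement then reduces to the single matrix-element identity $v_{jm}(e)=s_j(e)\,V_{jm}\,s_m(e)$ for all $j,m$, i.e. $\hamv{e}=\hat{D}(e)\,\hamd\,\hat{D}(e)$.

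Since $v_{jm}(e)=\pm V_{jm}$ and the sign is irrelevant whenever $V_{jm}=0$, it suffices to check, for every pair with $V_{jm}\neq 0$, that the sign $\epsilon_{jm}$ defined by $v_{jm}(e)=\epsilon_{jm}V_{jm}$ equals $s_j(e)s_m(e)$. The key structural input is condition~\eqref{eq:Vdefinition}: $V_{jm}\neq 0$ forces $|E_j-E_m|\le\Delta E$, which in turn forces $|n_j(e)-n_m(e)|\le 1$ (two reals differing by at most one have floors differing by at most one). Hence only two cases arise: $j$ and $m$ lie in the same band ($n_j=n_m$) or in adjacent bands ($|n_j-n_m|=1$).

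The core of the argument is the case analysis for $\epsilon_{jm}$. In the same-band case $h_j(e)=h_m(e)$, so $\epsilon_{jm}=+1$, while $s_j(e)s_m(e)=(-1)^{2\floor{n_j/2}}=+1$; these agree. In the adjacent-band case, take without loss of generality $j>m$ with $n_j=n_m+1$ (the ordering of indices follows from the ordering of energies, since $n_j>n_m$ implies $E_j>E_m$, and the case $m>j$ is symmetric because both $\epsilon_{jm}$ and $s_js_m$ are symmetric in $j,m$). A direct computation from $h_j=(-1)^{n_j+1}/2$ gives $h_j(e)-h_m(e)=(-1)^{n_m}$, so $(j-m)(h_j-h_m)$ has the sign of $(-1)^{n_m}$ and therefore $\epsilon_{jm}=(-1)^{n_m}$. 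On the other side, $s_j(e)s_m(e)=(-1)^{\floor{(n_m+1)/2}+\floor{n_m/2}}$; splitting into $n_m$ even and $n_m$ odd shows the exponent has the same parity as $n_m$, so $s_js_m=(-1)^{n_m}=\epsilon_{jm}$. This verifies the identity in all cases.

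The step I expect to be the main obstacle is precisely establishing the existence of a \emph{globally consistent} sign assignment $s_j(e)$, rather than merely a local one for each pair. Abstractly, $\{s_j\}$ exists iff the product of the signs $\epsilon_{jm}$ around every cycle of the coupling graph is $+1$; what makes this work is that couplings connect only equal or adjacent bands, so the signs organize into the one-dimensional recursion $t_{b+1}=(-1)^{b}t_b$ on the band index $b$, which is solved in closed form by $t_b=(-1)^{\floor{b/2}}$ --- exactly the choice $s_j(e)=(-1)^{\floor{n_j(e)/2}}$ above. Once the identity $\hamv{e}=\hat{D}(e)\,\hamd\,\hat{D}(e)$ is in hand, unitary invariance of the operator norm gives $\norm{\hamv{e}}=\norm{\hat{D}(e)\,\hamd\,\hat{D}(e)}=\norm{\hamd}$, completing the proof.
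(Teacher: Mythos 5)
Your proof is correct, and it is a cleaner packaging of the same underlying idea the paper uses, executed at a different level. The paper proves the lemma spectrally: for each eigenvector $\ket{\alpha}=\sum_j\beta_j\ket{E_j}$ of $\hamd$ it builds, via the recurrence $\gamma_{j+1}(e)=\pm\gamma_j(e)\beta_{j+1}/\beta_j$ (sign determined by whether $h_{j+1}(e)\geq h_j(e)$), a vector $\ket{\gamma(e)}$ and checks it is an eigenvector of $\hamv{e}$ with the same eigenvalue, concluding that the two operators are isospectral. That recurrence is precisely constructing $\gamma_j=s_j(e)\beta_j$ for a cumulative sign $s_j(e)$ that flips exactly where $h$ drops from $+\tfrac12$ to $-\tfrac12$, i.e.\ where the band index $n_j(e)$ passes from odd to even --- which is exactly where your closed form $s_j(e)=(-1)^{\floor{n_j(e)/2}}$ changes sign. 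So the two sign assignments agree up to an irrelevant global factor. What your formulation buys: (i) it works at the operator level, giving the single identity $\hamv{e}=\hat{D}(e)\,\hamd\,\hat{D}(e)$ and the norm equality by unitary invariance, with no need to enumerate eigenvectors; (ii) it sidesteps the division by $\beta_j$ in the paper's recurrence, which is ill-defined when some component $\beta_j$ vanishes and would need a separate patch; (iii) it makes explicit the consistency (cocycle) issue --- that the pairwise signs $\epsilon_{jm}$ must factorize as $s_js_m$ --- and resolves it correctly by observing that condition~\eqref{eq:Vdefinition} confines nonzero couplings to equal or adjacent bands, reducing the problem to a one-dimensional recursion on the band index. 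Your case analysis ($n_j=n_m$ giving $\epsilon_{jm}=+1=s_js_m$; $n_j=n_m+1$ giving $\epsilon_{jm}=(-1)^{n_m}=s_js_m$ after the parity check on $\floor{(n_m+1)/2}+\floor{n_m/2}$) is complete, since the ordering of energies forces $n_j\geq n_m$ whenever $j>m$ and both $\epsilon_{jm}$ and $s_js_m$ are symmetric in $j,m$.
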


\begin{proof}\label{pro:lem_2}
We prove lemma \ref{lemma:2} by showing that $\hamv{e}$ and $\hamd$ have same spectra. Then the Lemma follows directly from the definition of the operator norm.
Operator norm is defined as
\[
\norm{\hat A}=\sup_{\ket{\psi}}\frac{\norm{\hat A\ket{\psi}}}{\norm{\ket{\psi}}}
\]
where the supremum goes over all vectors $\ket{\psi}$. In the case of bounded Hermitian operators, this is equivalent to the supremum of absolutes value of eigenvalues of the operator $\hat A$,
\[\label{eq:corresponding_definition_of_op_norm}
\norm{\hat A}=\sup_{\L}\abs{\L},
\]
which means that the same spectrum directly imply the same operator norm between different operators. 

We express each eigenvector $\ket{\A}$ of the driving Hamiltonian $\hamd$ in terms of eigenvectors of the initial Hamiltonian $\hamz$ as 
\begin{eqnarray}
\ket{\A}=\sum_{j=1}^{N}{\B_{j}\lvert E_{j}\rangle}.
\end{eqnarray}
By definition of an eigenstate, the following equation holds:
\begin{eqnarray}
\A\sum_{j=1}^{N}{\B_{j}\lvert E_{j}\rangle}=\A\ket{\A}=\hamd\ket{\A}=\sum_{j,m=0}^{N}{V_{jm}\B_{m}\lvert E_{j}\rangle}.
\end{eqnarray}
Therefore, the fact that $\ket{\alpha}$ is an eigenstate of driving Hamiltonian $\hamd$ is equivalent to that for all $j$,
\[\label{eq:property_eigen}
\A\B_j=\sum_{m=1}^{N}{V_{jm}\B_{m}}.
\]

Further, for each eigenvalue of the driving Hamiltonian $\alpha$, and for any arbitrary but fixed value of $e$, we define the state $\ket{\G(e)}$ (will we omit writing the dependence on $\alpha$ for clarity), which we will prove to be an eigenvector of operator $\hamv{e}$ that corresponds to the same eigenvalue (but now of operator $\hamv{e}$) $\alpha$. Thus, by doing this we will show that $\hamd$ and $\hamv{e}$ have the same spectrum. We define state $\ket{\G(e)}$ as 
\[
\ket{\G(e)}=\sum_{j=1}^{N}\G_j(e)\ket{E_j}.
\]
Coefficients $\G_j(e)$ are defined by a recurrence relation
\[\label{eq:definition_G}
\begin{split}
&\G_{1}(e) =\B_{1}\\
&\G_{j+1}(e) =\begin{cases}
\ \G_{j}(e)\frac{\B_{j+1}}{\B_{j}},\quad &\mathrm{for}\ h_{j+1}(e)\geq h_{j}(e),\\
-\G_{j}(e)\frac{\B_{j+1}}{\B_{j}},\quad &\mathrm{for}\ h_{j+1}(e)< h_{j}(e).
\end{cases}
\end{split}
\]
Since $V_{jm}$ and $v_{jm}(e)$ are zero when $|E_j-E_m|>\Delta E$ (as follows from eqs.~\eqref{eq:Vdefinition} and~\eqref{eq:vdefinition}), we just need to consider pairs of $j,m$ that satisfy $|E_j-E_m|\leq\Delta E$. 

First, we assume $j>m$. We will show that there is at most one index $k$, $j> k\geq m$, for which $h_{k+1}(e)$ and $h_{k}(e)$ are different. This statement is relatively clear from Fig.~\ref{Fig:distribution_zeros}: for $|E_j-E_m|\leq\Delta E$ and a fixed $e$, the horizontal line drawn at height of $e$ can cross the red region into blue (or blue into red) at most once. To prove this statement mathematically (which duplicates the visual proof), for a contradiction we assume that there are two distinct indexes $k$ and $k'$ such that $j>k'>k\geq m$, where the sign changes, i.e., for which $h_{k+1}(e)\neq h_{k}(e)$ and $h_{k'+1}(e)\neq h_{k'}(e)$. By definition of $h_k(e)$ (eq.~\eqref{eq:definitionhj}),
\[\label{eq:biggerthan2delta}
E_{k'+1}-E_{k}> \Delta E.
\]
However, from $j>k'>k\geq m$ and from the fact that we assumed our energy levels to be ordered, we have
\[
E_j-E_m\geq E_{k'+1}-E_{k}
\]
which according to eq.~\eqref{eq:biggerthan2delta} means that $E_j-E_m>\Delta E$, which is in contradiction with our assumption $|E_j-E_m|\leq\Delta E$.


So we just showed that for every combination $j>m$ there exists at most one index $k$, $j>k\geq m$, such that $h(e)$ changes sign at this index, i.e., $h_k(e) \neq h_{k+1}(e)$. Using this knowledge, we compute a ratio that will be useful later in showing that  $\ket{\gamma(e)}$ is an eigenstate of $\hamv{e}$.
In the case when $k$ exists, we obtain ratio
\[
\begin{split}
&\frac{\G_{m}(e)}{\G_{j}(e)}=\prod_{s=m}^{j-1} \frac{\G_{s}(e)}{\G_{s+1}(e)}\\
&=\begin{cases}
\;\;\;\prod_{s=m}^{j-1} \frac{\B{s}}{\B{s+1}}=\;\frac{\B{m}}{\B{j}}&\quad\mathrm{for}\; h_{k+1}(e)-h_{k}(e)>0\\
-\prod_{s=m}^{j-1} \frac{\B{s}}{\B{s+1}}=-\frac{\B{m}}{\B{j}} &\quad\mathrm{for}\; h_{k+1}(e)-h_{k}(e)<0
\end{cases}
\end{split}
\]
by the definition of $\G$ (eq.~\eqref{eq:definition_G}).

Because $k$ is the only point where value $h(e)$ changes, it means that $h_j(e)=h_{k+1}(e)$ and $h_m(e)=h_{k}(e)$, which means we can rewrite the above equation as
\[
\begin{split}
&\frac{\G_{m}(e)}{\G_{j}(e)}
=\begin{cases}
\ \frac{\B{m}}{\B{j}} &\quad\mathrm{for}\; h_{j}(e)-h_{m}(e)> 0\\
-\frac{\B{m}}{\B{j}} &\quad\mathrm{for}\; h_{j}(e)-h_{m}(e)<0.
\end{cases}
\end{split}
\]
If such $k$ does not exist, then $h_m(e)=h_{m+1}(e)=\cdots=h_j(e)$, in which case $\G_{m}(e)/\G_{j}(e)=\B_m/\B_j$, so by including this case we can generalize this the above equation to
\[
\begin{split}
&\frac{\G_{m}(e)}{\G_{j}(e)}
=\begin{cases}
\ \frac{\B{m}}{\B{j}} &\quad\mathrm{for}\; h_{j}(e)-h_{m}(e)\geq 0\\
-\frac{\B{m}}{\B{j}} &\quad\mathrm{for}\; h_{j}(e)-h_{m}(e)<0.
\end{cases}
\end{split}
\]

The case of $m>j$ follows from identical arguments, which result in a change of sign. Combining cases $j>m$, $j<m$ and $j=m$ (which is trivial) together, we finally obtain the ratio for any $j$ and $m$ as
\begin{equation}\label{eq:theratio}
\frac{\G_{m}(e)}{\G_{j}(e)}=
\begin{cases}
\ \frac{\B_{m}}{\B_{j}}\quad\,\,\mathrm{for}\ (h_{j}(e)-h_{m}(e))(j-m)\geq 0\\
-\frac{\B_{m}}{\B_{j}}\quad\mathrm{for}\ (h_{j}(e)-h_{m}(e))(j-m)<0.
\end{cases}
\end{equation}
Combining eq.~\eqref{eq:vdefinition}, \eqref{eq:property_eigen} and eq.~\eqref{eq:theratio} we obtain
\[
\sum_{m=1}^{N}{v_{jm}(e)\frac{\G_{m}(e)}{\G_{j}(e)}}=\sum_{m=1}^{N}{V_{jm}\frac{\B_{m}}{\B_{j}}}=\alpha.
\]
Using this property, we have
\begin{equation}
\begin{split}
\hamv{e}\ket{\G(e)}&=\sum_{j,m=1}^{N}v_{jm}(e)\G_{m}(e)\ket{E_{j}}\\
&=\A\sum_{j=1}^{N}{\G_{j}(e)\lvert E_{j}\rangle}=\A\ket{\G(e)}.
\end{split}
\end{equation}
This directly implies that $\ket{\G(e)}$ is an eigenstate of $\hamv{e}$ associated with eigenvalue $\A$. Therefore, $\hamd$ and $\hamv{e}$ have the same spectrum. By Eq.~\eqref{eq:corresponding_definition_of_op_norm}, $\norm{\hamd}=\norm{\hamv{e}}$, which concludes the proof.
\end{proof}

\begin{theorem}
\label{theo:main_theo}
(Extended version of Theorem~1.~shown in the main text) For driving that couples energy levels with at most $\Delta E$ energy difference, as expressed by eq.~\eqref{eq:Vdefinition}, the following series of inequalities holds.
\[
\norm{[\hamz,\hamd]}\leq\Delta E\inf_{\hat{D}}\norm{\hamd\!-\!\hat{D}}\leq\Delta E\norm{\hamd-v_{\min}}/2\leq\Delta E\norm{\hamd},
\]
where $\hat{D}$ are matrices diagonal in the eigenbasis of the initial Hamiltonian $\ham$.
\end{theorem}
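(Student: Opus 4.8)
The plan is to establish the three inequalities of the chain one at a time, from left to right, leaning on Lemma~\ref{lemma:1} and Lemma~\ref{lemma:2}. The single observation that makes everything work is that any operator $\hat{D}$ diagonal in the eigenbasis of $\hamz$ commutes with $\hamz$, so that $[\hamz,\hamd]=[\hamz,\hamd-\hat{D}]$ for every such $\hat{D}$. Moreover, subtracting a diagonal $\hat{D}$ only alters the diagonal matrix elements $V_{jj}$, which correspond to zero energy difference and are therefore left unconstrained by eq.~\eqref{eq:Vdefinition}; hence $\hamd-\hat{D}$ obeys the same coupling property as $\hamd$, with the same $\Delta E$, and Lemmas~\ref{lemma:1} and~\ref{lemma:2} apply verbatim to $\hamd-\hat{D}$.

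For the first inequality, I would fix an arbitrary diagonal $\hat{D}$ and apply Lemma~\ref{lemma:1} to $\hamd-\hat{D}$, writing $[\hamz,\hamd-\hat{D}]=\int_{0}^{\Delta E}[\hamh{e},\tilde v(e)]\,de$, where $\tilde v(e)$ denotes the operator defined by eq.~\eqref{eq:vdefinition} but with $\hamd$ replaced by $\hamd-\hat{D}$. Taking the operator norm, the triangle inequality for integrals together with submultiplicativity of the commutator gives $\norm{[\hamz,\hamd-\hat{D}]}\le\int_{0}^{\Delta E}2\norm{\hamh{e}}\,\norm{\tilde v(e)}\,de$. Since the eigenvalues of $\hamh{e}$ are $\pm\tfrac12$ by eq.~\eqref{eq:definitionhj}, one has $\norm{\hamh{e}}=\tfrac12$, while Lemma~\ref{lemma:2} applied to $\hamd-\hat{D}$ gives $\norm{\tilde v(e)}=\norm{\hamd-\hat{D}}$ for every $e$. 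The integrand is thus the constant $\norm{\hamd-\hat{D}}$, and the integral evaluates to $\Delta E\norm{\hamd-\hat{D}}$. Using $[\hamz,\hamd]=[\hamz,\hamd-\hat{D}]$ and taking the infimum over all diagonal $\hat{D}$ yields $\norm{[\hamz,\hamd]}\le\Delta E\inf_{\hat{D}}\norm{\hamd-\hat{D}}$.

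For the second inequality I would note that scalar multiples of the identity, $\hat{D}=\lambda\I$, are a special case of operators diagonal in the $\hamz$ eigenbasis, so the infimum over all diagonal $\hat{D}$ is bounded above by the infimum over $\lambda$. Writing $v_{\max}$ for the largest eigenvalue of $\hamd$, Hermiticity gives $\norm{\hamd-\lambda\I}=\max(|v_{\max}-\lambda|,|v_{\min}-\lambda|)$, which is minimized at the spectral midpoint $\lambda=\tfrac12(v_{\max}+v_{\min})$ with value $\tfrac12(v_{\max}-v_{\min})$. Because every eigenvalue of $\hamd-v_{\min}$ is non-negative, $\norm{\hamd-v_{\min}}=v_{\max}-v_{\min}$, so this minimal value is exactly $\norm{\hamd-v_{\min}}/2$, establishing $\inf_{\hat{D}}\norm{\hamd-\hat{D}}\le\norm{\hamd-v_{\min}}/2$. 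The third inequality is the crudest: $\norm{\hamd-v_{\min}}=v_{\max}-v_{\min}\le 2\max(|v_{\max}|,|v_{\min}|)=2\norm{\hamd}$, i.e.\ $\norm{\hamd-v_{\min}}/2\le\norm{\hamd}$, which quantifies exactly what is lost in the naive bound of eq.~\eqref{eq:most_general}.

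The genuinely hard content has already been absorbed into Lemmas~\ref{lemma:1} and~\ref{lemma:2}; given those, the only step demanding real care is the first inequality, and there the subtle point is justifying that the two lemmas transfer to the shifted operator $\hamd-\hat{D}$ — namely that the diagonal shift neither spoils the coupling property eq.~\eqref{eq:Vdefinition} nor disturbs the spectrum argument of Lemma~\ref{lemma:2}. Once the commutator is written as an $e$-integral whose integrand is the uniform constant $\norm{\hamd-\hat{D}}$, the optimization over $\hat{D}$ is elementary, and the reduction to scalar shifts in the second inequality is a standard fact about the operator norm of Hermitian matrices.
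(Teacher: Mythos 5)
Your proposal is correct and follows essentially the same route as the paper: it invokes Lemma~\ref{lemma:1} and Lemma~\ref{lemma:2} together with $\norm{\hamh{e}}=\tfrac12$ to get $\norm{[\hamz,\hamd-\hat{D}]}\leq\Delta E\norm{\hamd-\hat{D}}$, uses the observation that a diagonal shift commutes with $\hamz$ and preserves the coupling property with the same $\Delta E$, and then specializes to $\hat{D}=\lambda\hat{I}$ to evaluate the infimum at the spectral midpoint. The only cosmetic difference is that you apply the lemmas to $\hamd-\hat{D}$ from the outset rather than first deriving the bound for $\hamd$ and then transferring it, and you spell out the elementary last inequality $\norm{\hamd-v_{\min}}/2\leq\norm{\hamd}$ which the paper leaves implicit.
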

\begin{proof}
By definition~\eqref{eq:definitionhj} the eigenvalues of $\hamh{e}$ are either $\frac{1}{2}$ or $-\frac{1}{2}$. Hence 
\[\label{eq:statement1}
\norm{\hamh{e}}=\frac{1}{2}
\]
holds for all $e$'s. 
Using lemma \ref{lemma:1}, $\norm{[\hamz,\hamd]}$ can be rewritten as
\[
\norm{[\hamz,\hamd]}\\=\norm{\int_{e=0}^{\Delta E}{[\hamh{e},\hamv{e}]de}}.
\]
We apply the triangle inequality of operator norm two times (once for the first and once for the second inequality) and obtain
\[\begin{split}
&\norm{\int_{e=0}^{\Delta E}{[\hamh{e},\hamv{e}]de}}\\
&\leq\int_{e=0}^{\Delta E}{\norm{[\hamh{e},\hamv{e}]}de}\leq\int_{e=0}^{\Delta E}{2\norm{\hamh{e}}\norm{\hamv{e}}de}.
\end{split}
\]
By lemma \ref{lemma:2} and eq.~\eqref{eq:statement1}, the right hand side of this equation equals
\[\label{eq:boundwithV}
\int_{e=0}^{\Delta E}2\norm{\hamh{e}}\norm{\hamv{e}}de=\int_{e=0}^{\Delta E}{\norm{\hamd}de}=\Delta E\norm{\hamd},
\]

Since any diagonal matrix $\hat{D}=\sum_{j=1}^{N}\lambda_j\ket{E_j}\bra{E_j}$, commutes with $\hamz$, we can make this bound tighter by minimizing over $\hat{D}$. The key is to realize that if we define $\hamd'=\hamd-\hat{D}$, where $\hamd$ satisfies eq.~\eqref{eq:Vdefinition}, then also $\hamd'=\sum_{j,m=1}^{N}(V_{jm}-\delta_{jm}\lambda_j)\ket{E_j}\bra{E_m}$ satisfies the same equation, with the same $\Delta E$. Thus, equation~\eqref{eq:boundwithV} holds also for operator $\hamd'$, and we have
\[
\norm{\left[\hamz,\hamd\right]}=\norm{\left[\hamz,\hamd-\hat{D}\right]}\leq \Delta E\norm{\hamd-\hat{D}},
\]
which holds for any $\hat{D}$ which is diagonal in the energy basis.

We can take the infimum over all such diagonal matrices, which gives
\[
\norm{\left[\hamz,\hamd\right]}=\inf_{\hat{D}}\norm{\left[\hamz,\hamd-\hat{D}\right]}\leq \Delta E\inf_{\hat{D}}\norm{\hamd-\hat{D}}
\]
This represents the tightest bound obtained by the present method.

To prove the theorem as it is written in the main text, we restrict ourselves to the case of $\hat{D}=\lambda\hat{I}$. In this case we can evaluate the infimum and obtain explicitly
\[
\begin{split}
&\norm{\left[\hamz,\hamd\right]}\leq 
\Delta E\inf_{\hat{D}}\norm{\hamd\!-\!\hat{D}}\\
&\leq\Delta E\inf_{\lambda}\norm{\hamd\!-\!\lambda\hat{I}}=\Delta E\!\norm{\hamd\!-\!v_{\min}}/2,
\end{split}
\]
which proves the theorem.


\end{proof}

\section{Lattice case}

We consider a battery composed of $L$ cells described by initial Hamiltonian 
\[\label{eq:initial_lattice_Hamiltonian}
\hamz=\sum_{l=1}^{L}\haml,
\]

 We charge this battery by turning on the driving Hamiltonian,
\begin{eqnarray}\label{eq:interaction_Hamiltonian}
\hamd(t)=\sum_{\bi\in K(L,k)}{\hat{V}_{\bi}}(t),
\end{eqnarray}
where, by definition, each term in the summation couples together at most $k$ cells. Expressed mathematically,
\begin{align}
K(L,k)&=\bigcup_{n=1}^k C(L,n),\\
C(L,n)&=\{(i_1,\dots,i_n)| i_1<\cdots<i_n \text{\ and\ }i_j\in \{1,\dots,L\}\},\nonumber
\end{align}
where $C(L,n)$ is a set of all combinations of $n$ sites, 
and $V_\bi$ acts as an identity on the site which does not appear in the index, i.e., for any local matrix $\hat{M}^{(l)}=\hat{I}\otimes\cdots\otimes\hat{I}\otimes\hat{M}\otimes\hat{I}\otimes\cdots\otimes\hat{I}$, where $\hat{M}$ is at the $l$-th place, if $l\not\in\bi=(i_1,\dots,i_n)$, then 
\[\label{eq:klocality}
[\hat{M}^{(l)},\hamd_\bi(t)]=0.
\]

\begin{corollary}\label{corol:2}
For identical cells, which have initial Hamiltonian $\hamz=\sum_{l=1}^{L}\haml$, where $\haml=\hat{I}\otimes\cdots\otimes\hat{I}\otimes \hat{H}_\mathrm{s}
\otimes\hat{I}\otimes\cdots\otimes\hat{I}$ and the driving Hamiltonian is of form~\eqref{eq:interaction_Hamiltonian}, the following inequality holds:
\begin{equation}\label{eq:corollary1.1}
\abs{P(t)}\leq k\norm{\hat{H}_\mathrm{s}-E^{(\mathrm{s})}_{\min}}\norm{\hamd(t)-v_{\min}(t)}/2, 
\end{equation}
where $E^{(\mathrm{s})}_{\min}$ is the single cell ground state energy. (In the main text, we denoted $E_{\mathrm{s}\min}\equiv E^{(\mathrm{s})}_{\min}$ for better readability.)
\end{corollary}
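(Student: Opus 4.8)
The plan is to deduce the corollary directly from Theorem~\ref{theo:main_theo}, whose bound reads $\abs{P(t)}\leq \Delta E\,\norm{\hamd(t)-v_{\min}(t)}/2$. Since the factor $\norm{\hamd(t)-v_{\min}(t)}$ is already exactly the one appearing in \eqref{eq:corollary1.1}, the entire task reduces to the purely spectral estimate
$\Delta E\leq k\,\norm{\hat{H}_{\mathrm{s}}-E^{(\mathrm{s})}_{\min}}$, i.e.\ controlling the largest energy gap of $\hamz$ that the driving \eqref{eq:interaction_Hamiltonian} is allowed to bridge. Note in particular that $\Delta E$ depends only on \emph{differences} of eigenvalues of $\hamz$, which is the structural reason why the relevant single-cell quantity turns out to be the \emph{spread} $\norm{\hat{H}_{\mathrm{s}}-E^{(\mathrm{s})}_{\min}}$ rather than $\norm{\hat{H}_{\mathrm{s}}}$.

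First I would record the elementary identity $\norm{\hat{H}_{\mathrm{s}}-E^{(\mathrm{s})}_{\min}\hat{I}}=\lambda_{\max}-\lambda_{\min}$, where $\lambda_{\max}$ and $\lambda_{\min}=E^{(\mathrm{s})}_{\min}$ are the extreme eigenvalues of $\hat{H}_{\mathrm{s}}$: the operator $\hat{H}_{\mathrm{s}}-E^{(\mathrm{s})}_{\min}\hat{I}$ is positive semidefinite with eigenvalues in $[0,\lambda_{\max}-\lambda_{\min}]$, so by \eqref{eq:corresponding_definition_of_op_norm} its operator norm equals the spread of the single-cell spectrum. Next I would exploit that the local terms $\haml$ mutually commute, so $\hamz$ admits a common product eigenbasis $\ket{\phi_1}\otimes\cdots\otimes\ket{\phi_L}$ with $\hat{H}_{\mathrm{s}}\ket{\phi_l}=\epsilon_l\ket{\phi_l}$ and total energy $\sum_l\epsilon_l$. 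The crucial input is the locality property \eqref{eq:klocality}: each term $\hat{V}_{\bi}$ acts as the identity on every cell outside the index set $\bi$, with $\abs{\bi}\leq k$.

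Then I would bound the gap contributed by a single term. Writing $\hamz=\hamz_{\bi}+\hamz_{\bar{\bi}}$ with $\hamz_{\bi}=\sum_{l\in\bi}\haml$ and $\hamz_{\bar{\bi}}=\sum_{l\notin\bi}\haml$, the identity action yields $[\hamz_{\bar{\bi}},\hat{V}_{\bi}]=0$, so $\hat{V}_{\bi}$ cannot connect two $\hamz$-eigenstates differing on the complementary cells and changes only the energy carried by the $k$ cells in $\bi$. Hence if $V_{jm}\neq0$ for the full driving, some term $\hat{V}_{\bi}$ has a nonzero matrix element between $\ket{E_j}$ and $\ket{E_m}$, forcing the two product states to coincide outside $\bi$, and the gap collapses to
\[
E_j - E_m = \sum_{l\in\bi}\big(\epsilon^{(j)}_l-\epsilon^{(m)}_l\big),
\]
a sum of at most $k$ single-cell differences each bounded in absolute value by $\lambda_{\max}-\lambda_{\min}$. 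The triangle inequality gives $\abs{E_j-E_m}\leq k(\lambda_{\max}-\lambda_{\min})=k\norm{\hat{H}_{\mathrm{s}}-E^{(\mathrm{s})}_{\min}}$, hence $\Delta E\leq k\norm{\hat{H}_{\mathrm{s}}-E^{(\mathrm{s})}_{\min}}$, and \eqref{eq:corollary1.1} follows on substituting this into Theorem~\ref{theo:main_theo}.

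The genuinely delicate step is the vanishing-matrix-element argument when $\hamz$ is spectrally degenerate, since then the product eigenbasis is not unique and a naive statement about coincidence of eigenvectors is ill-posed. I would make this watertight in a basis-independent way, phrasing the target as $\Pi_{E}\,\hamd\,\Pi_{E'}=0$ whenever $\abs{E-E'}>k(\lambda_{\max}-\lambda_{\min})$, where $\Pi_{E}$ are the spectral projectors of $\hamz$, and proving it term by term: because $[\hamz_{\bar{\bi}},\hat{V}_{\bi}]=0$ each $\hat{V}_{\bi}$ preserves the eigenspaces of $\hamz_{\bar{\bi}}$ and acts only within the $\abs{\bi}\leq k$ subsystem, whose Hamiltonian $\hamz_{\bi}$ has spectral width at most $k(\lambda_{\max}-\lambda_{\min})$; summing over $\bi$ then yields the claim for the full $\hamd$. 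Everything else is routine.
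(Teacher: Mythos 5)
Your proposal is correct and follows essentially the same route as the paper: both reduce the corollary to the spectral estimate $\Delta E\leq k\norm{\hat{H}_\mathrm{s}-E^{(\mathrm{s})}_{\min}}$ and then feed it into Theorem~\ref{theo:main_theo}, with the key observation that each $\hat{V}_{\bi}$ acts as the identity outside $\bi$ and hence can only change the energy of at most $k$ cells, each by at most the single-cell spectral spread. Your projector-based handling of degeneracies via $[\hamz_{\bar{\bi}},\hat{V}_{\bi}]=0$ is a slightly cleaner (and marginally more general) packaging of the paper's argument, which instead kills the off-diagonal matrix elements by commuting $\hat{V}_{\bi}$ with local projectors $\pro{E_{j_l}}{E_{j_l}}$ in the product eigenbasis and counts differing indices by contradiction.
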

\begin{proof}

The single site Hamiltonian has spectral decomposition as $\hamz_\mathrm{s}=\sum_{j=1}^{N_\mathrm{s}} E_{j}^{(\mathrm{s})}\pro{E_{j}^{(\mathrm{s})}}{E_{j}^{(\mathrm{s})}}$, where $E_{j}^{(\mathrm{s})}$ is ordered by increasing energy, and $\ket{E_{j}^{(\mathrm{s})}}$ is the single site energy eigenstate. The basis of the initial Hamiltonian is rewritten as 
\[\begin{split}
&\hamz=\sum_{\bj=(1,\cdots,1)}^{(N_\mathrm{s},\cdots,N_\mathrm{s})} E_{\bj}^{(\mathrm{s})}\pro{E_{\bj}^{(\mathrm{s})}}{E_{\bj}^{(\mathrm{s})}}
\end{split}
\]
where $\bj=(j_1,j_2,\cdots,j_L)$, $E_{\bj}^{(\mathrm{s})}=\sum_{l=1}^{L} E_{j_l}^{(\mathrm{s})}$, and $\ket{E_{\bj}^{(\mathrm{s})}}=\ket{E_{j_1}^{(\mathrm{s})}}\otimes\cdots\otimes\ket{E_{j_L}^{(\mathrm{s})}}$.

An element of the driving Hamiltonian given by eq.~\eqref{eq:interaction_Hamiltonian}, $\hamd_\bi$, is decomposed in the basis of the initial Hamiltonian as
\[
\hamd_\bi=\sum_{\bj,\bm=(1,\cdots,1)}^{(N_\mathrm{s},\cdots,N_\mathrm{s})} V_{\bi,\bj\bm}\pro{E_{\bj}^{(\mathrm{s})}}{E_{\bm}^{(\mathrm{s})}}.
\]

To understand this complicated expression, we illustrate the use of indexes as follows: $\bi$ denotes positions of all the sites which interact through $\hamd_\bi$ (for example $\bi=(1,2)$ means that only the first and the second site interact), while $\bj=(j_1,\dots,j_L)$ and $\bm=(m_1,\dots,m_L)$ denote energy eigenstates of each site (for example, for $L=4$, $\bm=(1,2,1,4)$ corresponds to tensor product of local energy eigenstates $\ket{E_{(1,2,1,4)}^{(\mathrm{s})}}=\ket{E_{1}^{(\mathrm{s})}}\ket{E_{2}^{(\mathrm{s})}}\ket{E_{1}^{(\mathrm{s})}}\ket{E_{4}^{(\mathrm{s})}}$). Then by definition (using additionally $\bj=(1,3,2,1)$),
\[\label{eq:exampleofVielement}
V_{(1,2),(1,3,2,1)(1,2,1,4)}=\bra{E_{(1,3,2,1)}^{(\mathrm{s})}}\hamd_{(1,2)}\ket{E_{(1,2,1,4)}^{(\mathrm{s})}}
\]

Many of the elements of $V_{\bi}$ are zero. To show that, let us assume that $l\notin \bi$. Then choosing $\hat{M}=\pro{E_{j_l}}{E_{j_l}}$, from eq.~\eqref{eq:klocality} we obtain
\[\label{eq:McommutatorVi}
0=\bra{E_\bj}[\hat{M}^{(l)},\hamd_\bi(t)]\ket{E_\bm}=\bra{E_\bj}\hat{V}_\bi\ket{E_\bm}(1-\delta_{j_l m_l})
\]
Thus, if $j_l\neq m_l$, then
\[\label{eq:Vijm}
V_{\bi,\bj\bm}\equiv\bra{E_\bj}\hat{V}_\bi\ket{E_\bm}=0.
\]
To summarize, for any $l\notin \bi$, if $j_l\neq m_l$, then $V_{\bi,\bj\bm}=0$. In other words, every element $V_{\bi,\bj\bm}$ that is non-zero must have the same indexes $m_l=j_l$ for every $l\not\in\bi$.

For example, element~\eqref{eq:exampleofVielement} is zero, because $3\notin \bi=(1,2)$, and $2=j_3\neq m_3=1$. Elements that are allowed to be non-zero are, for example, $V_{(1,2),(1,3,1,1)(1,2,1,1)}$, $V_{(1,2),(2,4,2,1)(1,2,2,1)}$, or $V_{(1,2),(3,3,3,3)(3,2,3,3)}$.

In order to prove the Corollary, we must show that 
for all $\bj$ and $\bm$ such that $\abs{E_{\bj}-E_{\bm}}>k\norm{\hamz_\mathrm{s}-E^{(\mathrm{s})}_{\min}}$ (where $E_j$ and $E_m$ are eigen energies of the full Hamiltonian $\hamz=\sum_l\haml$) implies
\[\label{eq:Vjm0}
V_{\bj\bm} = 0.
\]
If we manage to show that, because $\Delta E$ is the minimum of all such numbers (see eq.~\eqref{eq:Vdefinition}), it must be that $k \norm{\hamz_\mathrm{s}-E^{(\mathrm{s})}_{\min}}\geq \Delta E$. Thus, the Corollary will follow from Theorem~\ref{theo:main_theo}.

Because $\norm{\hamz_\mathrm{s}-E^{(\mathrm{s})}_{\min}}$ is the difference between the largest and the smallest eigenvalue of $\hamz_\mathrm{s}$, it takes more than $k$ sites to achieve energy difference $\abs{E_{\bj}-E_{\bm}}$ bigger than $k \norm{\hamz_\mathrm{s}-E^{(\mathrm{s})}_{\min}}$, meaning that at least $k+1$ indices (elements) in $\bj$ and $\bm$ must differ. To prove that mathematically, for a contradiction we assume that there are $k'\leq k$ cells in which $\bj$ and $\bm$ are different, while $\abs{E_{\bj}-E_{\bm}}>k \norm{\hamz_\mathrm{s}-E^{(\mathrm{s})}_{\min}}$. We have
\[\label{eq:upperboundonejem}
\begin{split}
\abs{E_\bj-E_\bm}&=\abs{\sum_{l=1}^LE_{j_l}^{(\mathrm{s})}-E_{m_l}^{(\mathrm{s})}}=\abs{\sum_{l=1}^{k'}E_{j_{\pi(l)}}^{(\mathrm{s})}-E_{m_{\pi(l)}}^{(\mathrm{s})}}\\
&\leq \sum_{l=1}^{k'}\abs{E_{j_{\pi(l)}}^{(\mathrm{s})}-E_{m_{\pi(l)}}^{(\mathrm{s})}}
\leq k'\norm{\hamz_{\mathrm{s}}-E^{(\mathrm{s})}_{\min}},
\end{split}
\]
where $\pi(l)$ is a relabeling of indices to include only those where $j_l$ and $m_l$ are different. This is in contradiction with our assumption. Thus, if $\abs{E_{\bj}-E_{\bm}}>k \norm{\hamz_\mathrm{s}-E^{(\mathrm{s})}_{\min}}$, then vectors $\bj$ and $\bm$ have at least $k+1$ different elements.

However, since vector $\bi$ in eq.~\eqref{eq:Vijm} has at most $k$ elements according to the assumption of the Corollary, but there are at least $k+1$ elements of $\bj$ and $\bm$ that differ, according to that equation it must be that
\[
V_{\bi,\bj\bm}=0.
\]
This holds for every $\bi$ that make up $\hamd$, thus also
\[
V_{\bj\bm}=0,
\]
which, by explanation below eq.~\eqref{eq:Vjm0}, proves the Corollary.

\end{proof}

Finally, we prove eq.~(17) in the main text.

\begin{corollary}
Let us consider identical cells, which have initial Hamiltonian $\hamz=\sum_{l=1}^{L}\haml$, where $\haml=\hat{I}\otimes\cdots\otimes\hat{I}\otimes \hat{H}_\mathrm{s} \otimes\hat{I}\otimes\cdots\otimes\hat{I}$. Where we rewrite driving Hamiltonian $\hamd=\sum_{\bi\in K(L,k)}{\hat{V}_{\bi}}$ as
\[
\hamd=\sum_{k=1}^{N}\sum_{\bi\in C(L,k)}{\hat{V}_{\bi}}=\sum_{k=1}^{N}\hat{V}_{k}.
\]
The following inequality holds:
\[
\abs{P(t)}\leq \sum_{k=1}^{L} k\norm{\hamd_{k}-v_{k,\min}}\norm{\hat{H}_\mathrm{s}-E^{(\mathrm{s})}_{\min}}/2.
\]

\end{corollary}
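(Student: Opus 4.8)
The plan is to exploit the linearity of the commutator in the driving, splitting $\hamd=\sum_{k=1}^{L}\hamd_k$ into its parts $\hamd_k=\sum_{\bi\in C(L,k)}\hat{V}_\bi$ of definite locality, bounding each resulting commutator $[\hamz,\hamd_k]$ separately with the extended Theorem~\ref{theo:main_theo}, and only then recombining via the triangle inequality. The point is that each $\hamd_k$ obeys the hypothesis of Theorem~\ref{theo:main_theo} with its own, smaller, maximal energy jump, so the single global bound of Corollary~\ref{corol:2} can be refined into a $k$-resolved sum.

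Concretely, I would proceed as follows. First, from eqs.~\eqref{eq:power}--\eqref{eq:most_general} record $\abs{P(t)}\le\norm{[\hamz,\hamd(t)]}$. Second, use linearity of the commutator to write $[\hamz,\hamd]=\sum_{k=1}^{L}[\hamz,\hamd_k]$. Third --- the step that reuses the machinery of Corollary~\ref{corol:2} --- observe that each $\hamd_k$ is built solely from terms $\hat{V}_\bi$ with $\abs{\bi}=k$, so by exactly the counting argument of eqs.~\eqref{eq:Vijm}--\eqref{eq:upperboundonejem} (now with the fixed value $k$ in place of the upper bound on locality) any two eigenstates $\ket{E_\bj}$, $\ket{E_\bm}$ connected by $\hamd_k$ differ in at most $k$ sites and hence satisfy $\abs{E_\bj-E_\bm}\le k\norm{\hat{H}_\mathrm{s}-E^{(\mathrm{s})}_{\min}}$. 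Therefore $\hamd_k$ satisfies eq.~\eqref{eq:Vdefinition} with $\Delta E=k\norm{\hat{H}_\mathrm{s}-E^{(\mathrm{s})}_{\min}}$. Fourth, apply the extended Theorem~\ref{theo:main_theo} to each $\hamd_k$ (taking $\hat{D}=\lambda\hat{I}$ in the infimum), which gives
\[
\norm{[\hamz,\hamd_k]}\le k\norm{\hat{H}_\mathrm{s}-E^{(\mathrm{s})}_{\min}}\,\norm{\hamd_k-v_{k,\min}}/2,
\]
where $v_{k,\min}$ is the smallest eigenvalue of $\hamd_k$. Fifth, recombine by the triangle inequality,
\[
\abs{P(t)}\le\norm{\textstyle\sum_{k=1}^{L}[\hamz,\hamd_k]}\le\sum_{k=1}^{L}\norm{[\hamz,\hamd_k]},
\]
and substitute the per-block bound to obtain the claim.

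I expect the only genuinely delicate point to be the third step: verifying that restricting to the exactly-$k$-local block $\hamd_k$ (rather than the full $\hamd$ whose locality is merely \emph{bounded} by $k$) still forces the energy-jump bound $\Delta E=k\norm{\hat{H}_\mathrm{s}-E^{(\mathrm{s})}_{\min}}$. This is not quite automatic from Corollary~\ref{corol:2} as stated, since there $k$ is an upper bound on locality, whereas here each block carries a definite locality; one has to re-run the argument below eq.~\eqref{eq:Vijm} with $\abs{\bi}=k$ exactly, confirming that non-vanishing matrix elements of $\hamd_k$ require $\bj$ and $\bm$ to agree outside the $k$ interacting sites, so that at most $k$ single-site energies change. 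The remaining steps are routine: crucially, the recombination applies the triangle inequality \emph{after} the per-block minimization over the constant shift, which is exactly why each term carries its own $v_{k,\min}$ instead of a single global $v_{\min}$ --- and this is what makes the bound genuinely sharper than Corollary~\ref{corol:2}, as illustrated by the nearest-neighbour-plus-global example in the main text.
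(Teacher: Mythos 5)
Your proof is correct and follows essentially the same route as the paper's: split $\hamd$ into its $k$-local blocks, bound each $\norm{[\hamz,\hamd_k]}$ by $k\norm{\hat{H}_{\mathrm{s}}-E^{(\mathrm{s})}_{\min}}\,\norm{\hamd_k-v_{k,\min}}/2$, and recombine with the triangle inequality. The only difference is cosmetic: the paper simply invokes Corollary~\ref{corol:2} for each block, which applies directly because $C(L,k)\subseteq K(L,k)$ makes an exactly $k$-local block in particular at-most-$k$-local, so the ``delicate point'' you flag needs no separate re-derivation.
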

\begin{proof}
By triangular inequality, the charging power is bounded as
\[
\abs{P(t)}\leq\norm{[\hamz,\hamd]}\leq\sum_{k=1}^{N}\norm{[\hamz,\hamd_{k}]}.
\]
Using Corollary~\ref{corol:2}, we obtain
\[\begin{split}
&\abs{P(t)}\leq\sum_{k=1}^{N}\norm{[\hamz,\hamd_{k}]}\\
&\leq\sum_{k=1}^{L} k\norm{\hamd_{k}-v_{k,\min}}\norm{\hat{H}_\mathrm{s}-E^{(\mathrm{s})}_{\min}}/2.
\end{split}
\]

\end{proof}

\section{\label{sec:level3} Random all-to-all Hamiltonian}

In this section, we estimate the normalization factor in the SY-like Hamiltonian used as an example in the main text, and we provide an explanation of why in Fig 1.~(b) the maximum decreases slightly with growing size of the system $L$.

Let us consider a battery made of a random, $2$-local, \textit{all-to-all} driving Hamiltonian given by 
\[\label{eq:driving_Hamiltonian_random}
\hamd=C\, \hamd_0,
\]
where
\[\label{eq:driving_Hamiltonian_random0}
\hamd_0= \sum_{i < j}^L \sum_{\alpha = x,\, y,\, z} J_{ij}^{\alpha} \hat{\sigma}_i^\alpha \hat{\sigma}_j^\alpha.
\]
the couplings $J_{ij}^{\alpha}$ are randomly extracted from a normal distribution and $C=\frac{2}{\norm{\hamd_0-v_{0,\min}}}$ is a normalization factor which ensures that $\norm{\hamd-v_{\min}}=2$. We consider the initial Hamiltonian to be $\hamz=\sum_{i=1}^{L} h\hat{\sigma}_i^z$. 

Since $C$ depends on $J_{ij}^{\alpha}$, which are randomly extracted from a normal distribution, we see that $C$ as well changes randomly in each simulation. 
Our goal is to estimate the scaling of the quantity $\norm{\hamd_0-v_{0,\min}}$ with the system size, $L$.

To this end, we start by recalling that, for large $L$, the eigenvalues of $\hamd_0$, $v_{0, i}$, follow a Gaussian distribution. 
In particular, they satisfy
\[
\mean{\sum_{i=1}^{2^L} {v_{0, i}}}_J=\mean{\mathrm{Tr}({\hamd_0})}_J=0,
\]
as well as
\[\label{eq:sigma}
\begin{split}
&\sigma^22^L=:\mean{\sum_{i=1}^{2^L} {v_{0, i}}^2}_J=\mean{\mathrm{Tr}( {\hamd_0}^2)}_J\\
&=\mean{\mathrm{Tr}\left(\sum_{i < j}^L \sum_{\alpha = x,\, y,\, z}\sum_{i' < j'}^L \sum_{\alpha' = x,\, y,\, z} J_{ij}^{\alpha}J_{i'j'}^{\alpha'} \hat{\sigma}_i^\alpha \hat{\sigma}_j^\alpha \hat{\sigma}_{i'}^{\alpha'} \hat{\sigma}_{j'}^{\alpha'}\right)}_J,
\end{split}
\]
where $\langle\rangle_J$ denotes ensemble averaging over $J_{ij}^{\alpha}$. Since trace of $\sigma^{\alpha}$ is $0$, trace of elements in \eqref{eq:sigma} are $0$, except in the case $i=i'$, $j=j'$ and $\alpha=\alpha'$. 
Using that the variance of $J_{ij}^{\alpha}$ is $1$, we obtain
\[
\sigma^22^L=\mean{\mathrm{Tr}\left(\sum_{i<j}^L \sum_{\alpha = x,\, y,\, z} (J_{ij}^{\alpha})^2 \mathds{1}\right)}_J
=\frac{3L(L-1)}{2}2^{L}.
\]
Hence, the eigenvalues of $\hamd_0$, $v_{0, i}$, follow a Gaussian distribution having zero mean and variance given by $\sigma^2=\frac{3L(L-1)}{2}$.

As last step, we need to estimate the expectation value for the maximum among $N$ numbers randomly extracted from a Gaussian distribution having vanishing mean value and variance equal to $\sigma$.
It can be shown, \cite{kamath2015bounds}, that such a maximum value goes like  $\sigma\sqrt{2\ln{N}}$ for large $N$. 
As a result, using that, in our case, $N=2^L$ the mean value of $\frac{\norm{\hamd_0-v_{0,\min}}}{2}$averaged over $J_{ij}^{\alpha}$ takes the form
\[\label{eq:analytical_estimate}
\begin{split}
&\mean{\frac{\norm{\hamd_0-v_{0,\min}}}{2}}_J=\mean{\frac{v_{0,\max}-v_{0,\min}}{2}}_J=\mean{v_{0,\max}}_J\\
&=\sigma\sqrt{2\ln(2^L)}=\sqrt{\frac{3L(L-1)}{2} 2 L \ln 2}=\sqrt{3L^2(L-1)\ln 2}
\end{split}
\]
\begin{figure}[t!]
\begin{center}
    \includegraphics[width=.9\hsize]{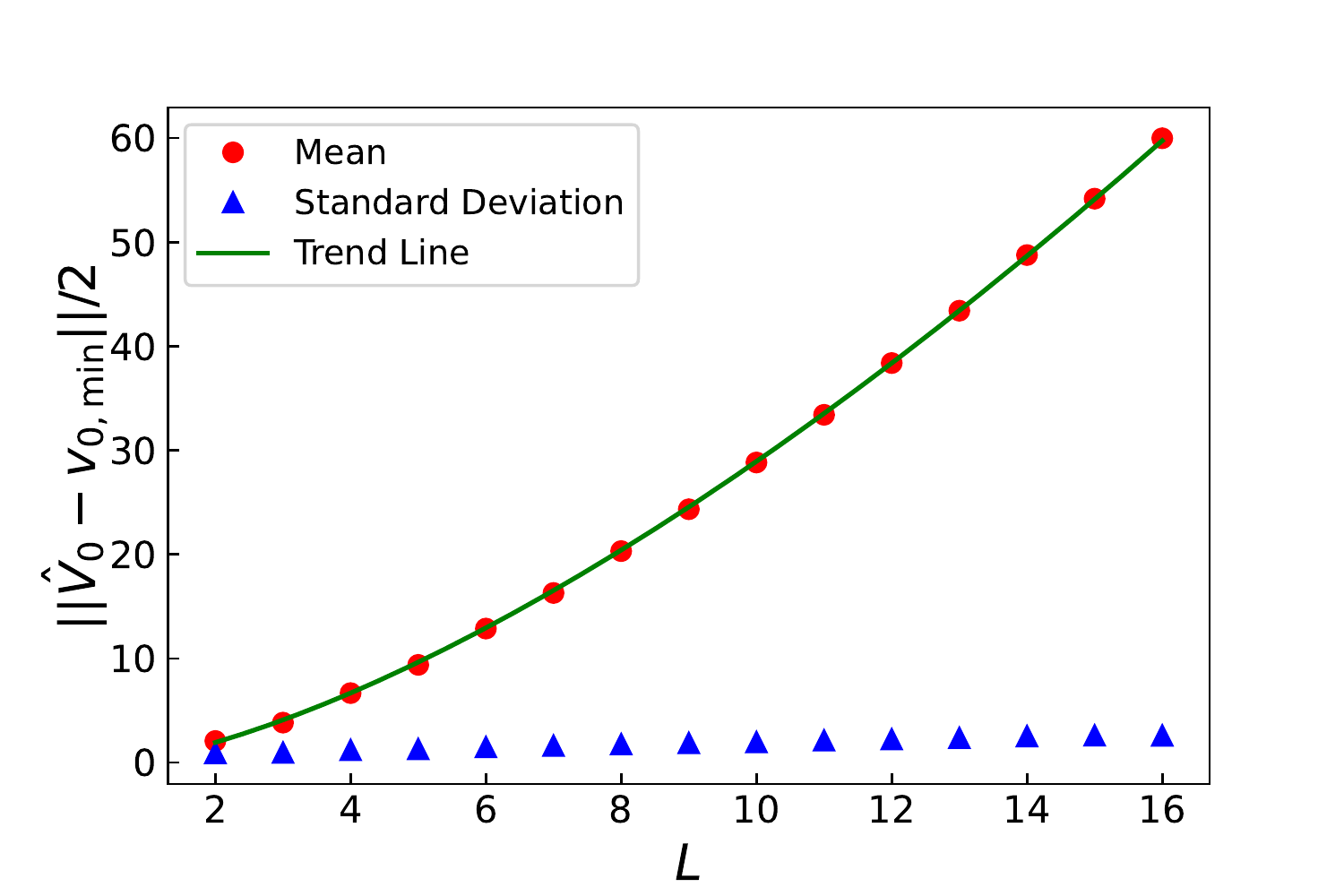}
    \caption{
    Mean and standard deviation of $\norm{\hamd_0-v_{0\min}}/2$ for each $L$. Total number of realization is 10000. The numerical data are fitted against an Ansatz of the form $\alpha\sqrt{L^2(L-1)}$. The fitting parameter, $\alpha$, takes the value $\alpha = 0.964111$.
    }
\label{fig:figure1}
\end{center}
\end{figure}
Motivated by this result, in Fig.~\ref{fig:figure2} we numerically computed $\mean{\frac{\norm{\hamd_0-v_{0,\min}}}{2}}_J$ for $L$ up to $16$ and we have fitted the resulting values against an Ansatz of the form $\alpha \sqrt{L^2 (L-1)}$, with $\alpha$ being a fitting parameter.
As it is evident from the figure the agreement is excellent. Thus, from the analytical estimate, Eq.~\eqref{eq:analytical_estimate}, tested in Fig.~\ref{fig:figure1}, we conclude that the normalization factor scales as $C\propto \sqrt{L^2(L-1)}$.

\begin{figure}[t!]
\begin{center}
    \includegraphics[width=.9\hsize]{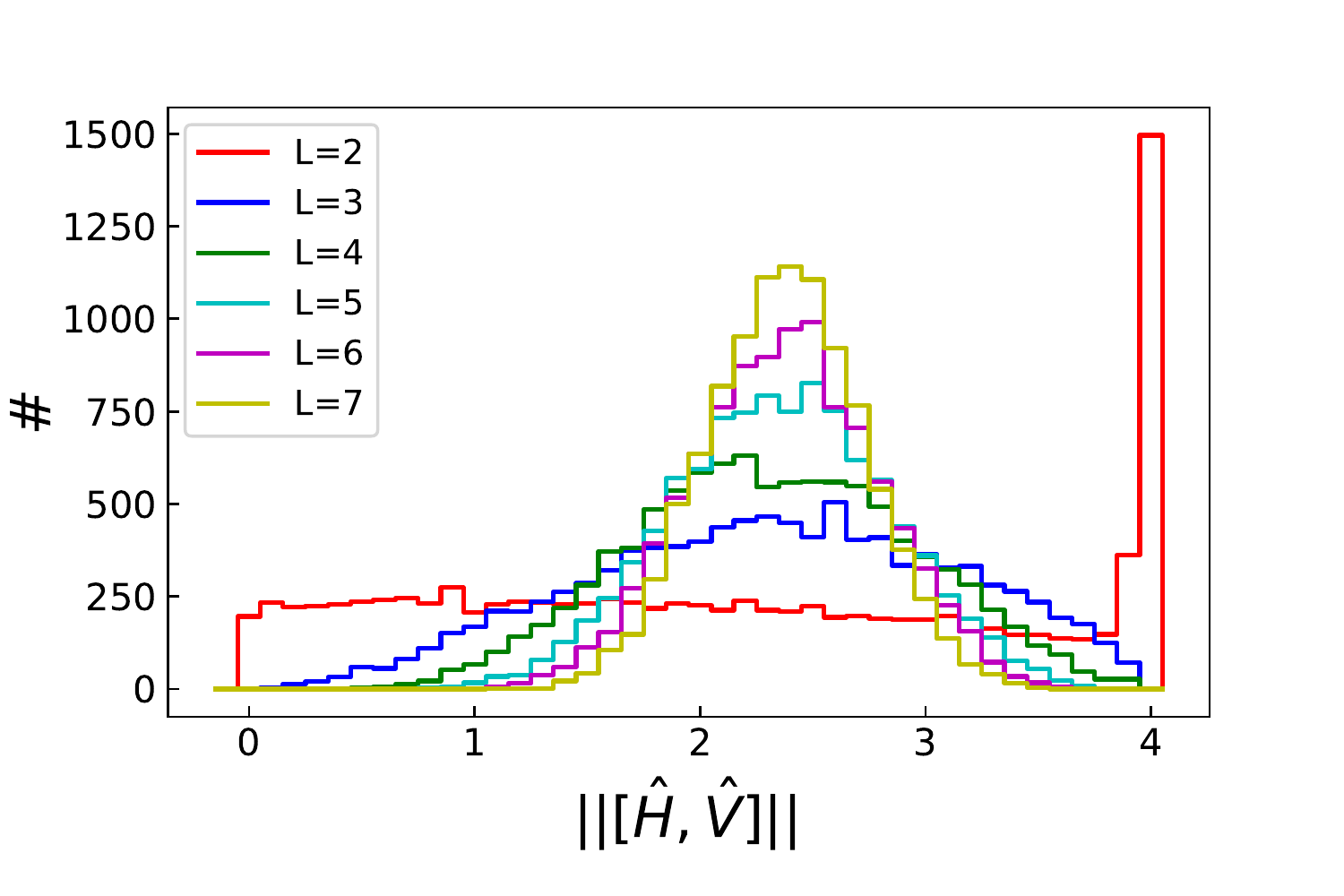}\\
    \ \ (a)\\
    \includegraphics[width=.9\hsize]{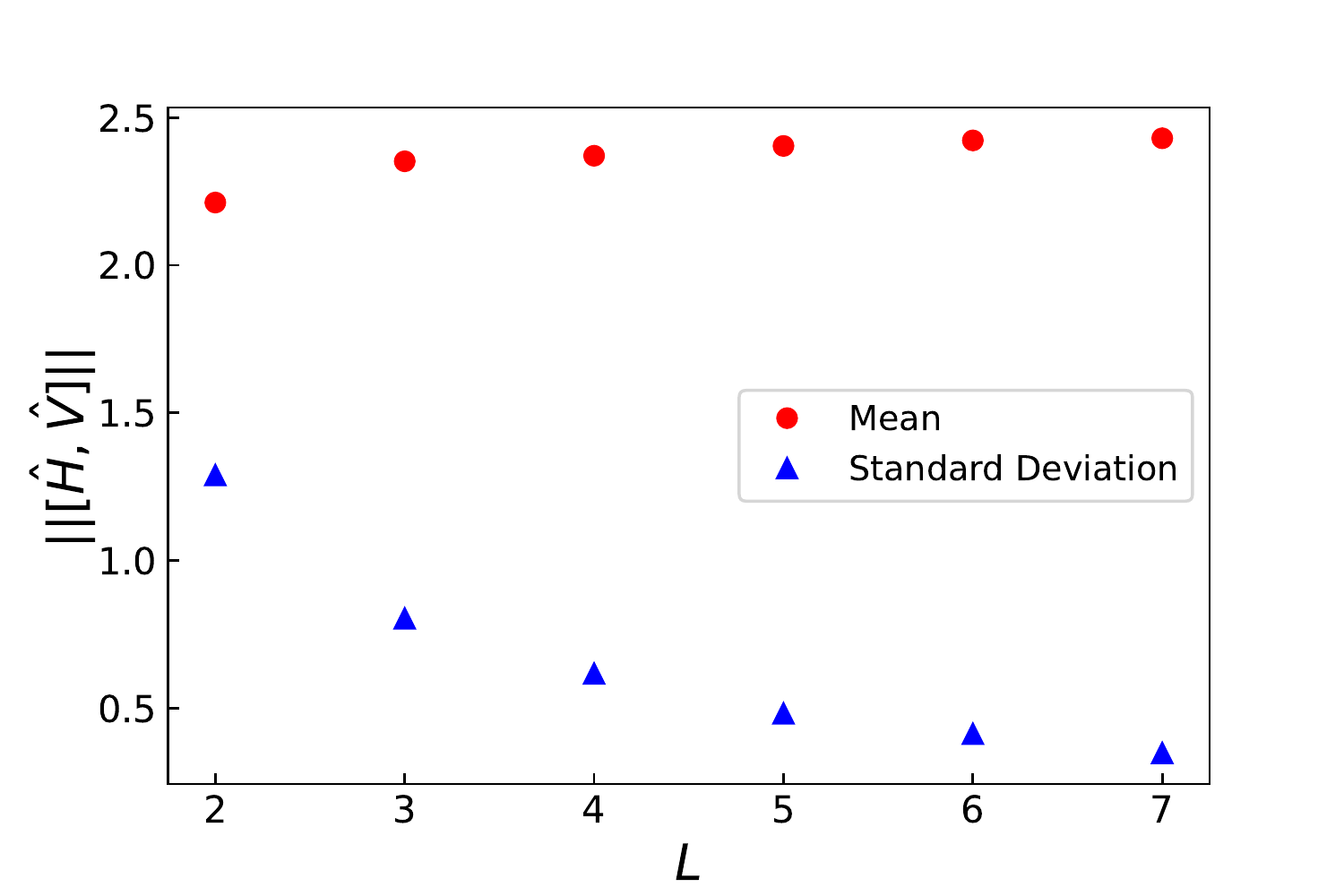}\\
    \ \ (b)
    \caption{
    (a) Number of the cases $x+0.1>\norm{[\hamd-\hamz]}\geq x$ for each $L$. Total number of realization is 10000.
    (b) Mean and standard deviation of $\norm{[\hamd-\hamz]}$ for each $L$. Total number of realization is 10000.
    }
\label{fig:figure2}
\end{center}
\end{figure}

Now we will explain why in Fig 1.~(b) the maximum decreases slightly with growing size of the system $L$. Also the quantity $\norm{[\hamz,\hamd]}$ is determined in terms $J_{ij}^{\alpha}$; therefore, it also changes in each realization (instance) of the simulation. 
Fig.~\ref{fig:figure2} (a) shows the distribution (histogram) of $\norm{[\hamz,\hamd]}$ as a function of $L$. As $L$ increase, the shape of distribution becomes sharper and localized around the mean value $\sim 2.5$. Since Fig 1.~(b) in the main text shows the maxima from a finite number of 500 realizations, the maximum drawn from this finite number also decreases. Additionally, from Fig.~\ref{fig:figure2} (b) we see that the standard deviation decreases faster with $L$ than the raising of the mean value. To conclude, the probability of getting large values of $\norm{[\hamz,\hamd]}$ decreases with $L$.  For this reason, the expectation value of the maximum value of $\norm{[\hamz,\hamd]}$ reduces with $L$, if the number of realizations stays fixed for every $L$. Since $\norm{[\hamz,\hamd]}$ is also an upper bound on maximum charging power $P_{\max}$, we expect a similar behavior also for this quantity.


%

\end{document}